\documentclass[a4paper,USenglish,cleveref,modulo,autoref,thm-restate]{lipics-v2021}

\graphicspath{{./figures/}}

\usepackage{wrapfig}
\usepackage{amssymb}
\usepackage{amsmath}
\usepackage{xspace}
\usepackage{thm-restate}
\usepackage[bottom]{footmisc}

\newcommand{\Reals}{\mathbb{R}}
\newcommand{\CH}{C\!H}

\newcommand{\etal}{\textit{et al.}\xspace}

\newcommand{\myparNS}[1]{\noindent{\bfseries #1.}}
\newcommand{\mypar}[1]{\medskip\myparNS{#1}}

\DeclareMathOperator{\sgn}{sgn}

\newtheorem{fact}[theorem]{Fact}

\bibliographystyle{plainurl}

\title{Obstructing Classification via Projection}

\author{Pantea Haghighatkhah}{TU Eindhoven, The Netherlands}{p.haghighatkhah@tue.nl}{}{}
\author{Wouter Meulemans}
{TU Eindhoven, the Netherlands }{w.meulemans@tue.nl}{}{}
\author{Bettina Speckmann}{TU Eindhoven, the Netherlands }{b.speckmann@tue.nl}{https://orcid.org/0000-0002-8514-7858}{}
\author{J\'{e}r\^{o}me Urhausen}{Utrecht University, the Netherlands }{j.e.urhausen@uu.nl}{}{Supported by the Dutch Research Council (NWO); 612.001.651.}
\author{Kevin Verbeek}{TU Eindhoven, the Netherlands }{k.a.b.verbeek@tue.nl}{}{}

\authorrunning{P. Haghighatkhah, W. Meulemans, B. Speckmann, J. Urhausen, and K. Verbeek} 

\Copyright{Pantea Haghighatkhah, Wouter Meulemans, Bettina Speckmann, J\'{e}r\^{o}me Urhausen and Kevin Verbeek} 

\ccsdesc[500]{Theory of computation~Computational geometry}
\ccsdesc[300]{Theory of computation~Models of learning}

\keywords{Projection, classification, models of learning}

\acknowledgements{Research on the topic of this paper was initiated at the 5th Workshop on Applied Geometric Algorithms (AGA 2020) in Langbroek, NL. The authors thank Jordi Vermeulen for initial discussions on the topic of this paper.}

\EventEditors{John Q. Open and Joan R. Access}
\EventNoEds{2}
\EventLongTitle{42nd Conference on Very Important Topics (CVIT 2016)}
\EventShortTitle{CVIT 2016}
\EventAcronym{CVIT}
\EventYear{2016}
\EventDate{December 24--27, 2016}
\EventLocation{Little Whinging, United Kingdom}
\EventLogo{}
\SeriesVolume{42}
\ArticleNo{23}

\begin{document}

\hideLIPIcs
\nolinenumbers 
\maketitle              
\begin{abstract}
Machine learning and data mining techniques are effective tools to classify large amounts of data.
But they tend to preserve any inherent bias in the data, for example, with regards to gender or race. Removing such bias from data or the learned representations is quite challenging. In this paper we study a geometric problem which models a possible approach for bias removal. Our input is a set of points $P$ in Euclidean space $\Reals^d$ and each point is labeled with $k$ binary-valued properties. A priori we assume that it is ``easy'' to classify the data according to each property. Our goal is to obstruct the classification according to one property by a suitable projection to a lower-dimensional Euclidean space $\Reals^m$ ($m < d$), while classification according to all other properties remains easy. 

What it means for classification to be easy depends on the classification model used. 
We first consider classification by linear separability as employed by support vector machines. We use Kirchberger's Theorem to show that, under certain conditions, a simple projection to $\Reals^{d-1}$ suffices to eliminate the linear separability of one of the properties whilst maintaining the linear separability of the other properties. We also study the problem of maximizing the linear ``inseparability'' of the chosen property.
Second, we consider more complex forms of separability and prove a connection between the number of projections required to obstruct classification and the Helly-type properties of such separabilities.
\keywords{Projection \and classification \and models of learning.}
\end{abstract}

\section{Introduction}
\label{sec:introduction}

Classification is one of the most basic data analysis operators: given a (very) large set of high-dimensional input data with a possibly large set of heterogeneous properties, we would like to classify the data according to one or more of these properties to facilitate further analysis and decision making. Machine learning and data mining techniques are frequently employed in this setting, since they are effective tools to classify large datasets. However, such as any data-driven techniques, they tend to preserve any bias inherent in the data, for example, with regards to gender or race. Such bias arises from under-representation of minority groups in the data or is caused by historical data, which reflect outdated societal norms. Bias in the data might be inconsequential, for example in music recommendations, but it can be harmful when classification algorithms are used to make life-changing decisions on, for example, loans, recruitment, or parole~\cite{skeem2016risk}. 

Naturally, the identification and removal of bias receives a significant amount of attention, although the problem is still far from solved. For example, Mehrabi~\etal~\cite{DBLP:journals/corr/abs-1908-09635} provide a taxonomy of fairness definitions and bias types. They list the biases caused by data and the types of discrimination caused by machine learning techniques. Many approaches have been considered to eliminate or reduce bias in machine learning models. Some researchers have used a statistical approach to address this problem (e.g.,~\cite{DBLP:conf/nips/HardtPNS16}), while others focus on data preprocessing or controlling the sampling to compensate for bias or under-representation in the data (e.g.,~\cite{DBLP:conf/aies/AminiSSBR19,DBLP:journals/kais/KamiranC11}). Another approach is to use an additional (adversarial) machine learning model to eliminate bias in the first model (e.g.,~\cite{edwards2016censoring,pmlr-v80-madras18a,DBLP:conf/aies/ZhangLM18}). One major problem of attempting to eliminate bias (or increasing fairness) in machine learning is that it may negatively affect the accuracy of the learned model. This trade-off has also been studied extensively (e.g.,~\cite{DBLP:journals/corr/BerkHJJKMNR17,DBLP:conf/aistats/ZafarVGG17}).

We are particularly interested in data that is represented by vectors in high-dimensional Euclidean space. Such data arises, for example, from word embeddings for textual data. Several studies show that the bias present in the training corpora is also present in the learned representation (e.g.~\cite{DBLP:conf/icml/BrunetAAZ19,Caliskan183}). 
Abbasi~\etal~\cite{DBLP:conf/sdm/AbbasiFSV19} recently introduced a geometric notion of stereotyping. In this paper we follow the same premise that bias is in some form encoded in the geometric or topological features of the high-dimensional vector representation and that manipulating this geometry can remove the bias. This premise has been the basis for many papers on algorithmic fairness (e.g.,~\cite{edwards2016censoring,disparate,pmlr-v28-zemel13}).

Several papers investigate the theory that gender is captured in certain dimensions of the data. 
Bolukbasi~\etal~\cite{DBLP:conf/nips/BolukbasiCZSK16} postulate that the bias manifests itself in specific ``particularly gendered'' words and that equalizing distances to these special words removes bias.
Zhao~\etal~\cite{DBLP:conf/emnlp/ZhaoZLWC18} devise a model which attempts to represent gender in one dimension which can be removed after training to arrive at a (more) gender-neutral word representation.
Bordia and Bowman~\cite{DBLP:conf/naacl/BordiaB19} remove bias by minimizing the projection of the embeddings on the gender subspace (using a regularization term in the training process).
Very recently, various papers~\cite{Dev_Li_Phillips_Srikumar_2020,pmlr-v89-dev19a,Yuzi-AMC2020,DBLP:conf/acl/RavfogelEGTG20} explored the direct use of projection to remove sensitive properties of the data. In some cases the data is not projected completely, as removing sensitive properties completely may negatively affect the quality of the model. 

In this paper we take a slightly more general point of view. We say that a property is present in the data representation if it is ``easy'' to classify the data according to that property. That is, a property (such as gender) can be described by more complicated geometric relations than a subspace. Given the premise that the geometry of word embeddings encodes important relations between the data, then any bias removal technique needs to preserve as much as possible of these relations. Hence we investigate the use of projection to eliminate bias while maintaining as many other relations as possible. We say that the relation of data points with respect to specific properties is maintained by a projection, if it is still easy to classify according to these properties after projection. Our paper explores how well projection can obstruct classification according to a specific property (such as gender) for certain classification models.

\mypar{Problem statement}
Our input is a set of $n$ points $P = \{p_1, \ldots, p_n\}$ in general position in~$\Reals^d$. For convenience we identify the points with their corresponding vector. We model the various properties of the data (such as gender) as binary labels.\footnote{Neither gender nor many other societally relevant properties are binary, however, we restrict ourselves to binary properties to simplify our mathematical model.} Hence, for all points in~$P$ we are also given $k$ binary-valued \emph{properties}, represented as functions $a_i\colon P \rightarrow \{-1, 1\}$ for $1 \leq i \leq k$. 
We denote the subset of points $p \in P$ with $a_i(p) = 1$ as $P^i_{+}$, and the subset of points $p \in P$ with $a_i(p) = -1$ as $P^i_{-}$ for $1 \leq i \leq k$. For a point $p \in P$, we refer to the tuple $(a_1(p), \ldots, a_k(p))$ as the \emph{label} of $p$. Note that there are $2^k$ different possible labels. Generally speaking, we do not know which specific properties a dataset has. However, to study the influence of projection on all relevant properties of a dataset, we assume that these properties are given.

\begin{figure}[b]
    \centering
    \includegraphics{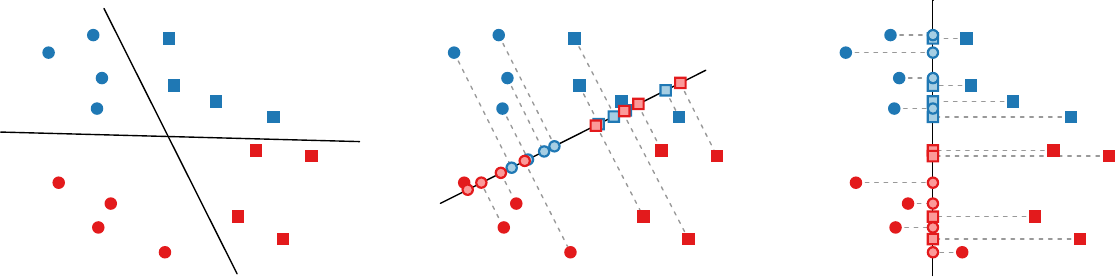}
    \caption{Left: data points with two linearly-separable properties: shape and color. Middle: a projection which keeps shape separated, but not color. Right: a projection with the opposite effect.}
    \label{fig:points_and_separators}
\end{figure}

We assume that it is ``easy'' to classify the points in $P$ according to the properties by using the point coordinates. Throughout the paper, we consider different definitions for what is considered easy or difficult to classify. Our goal is to compute a projection $P'$ of $P$ to lower dimensions such that the first property $a_1$ becomes difficult to classify in $P'$, and the other properties $a_2, \ldots, a_k$ remain easy to classify in $P'$. As a shorthand we use the notation $P_{-} = P^1_{-}$ and $P_{+} = P^1_{+}$ for the point sets in which the special property $a_1$ is set to $-1$ and $+1$, respectively. Similarly, we use the notation $P'_{-}$ and $P'_{+}$ for the point sets $P_{-}$ and $P_{+}$ after projection. In most cases we will consider a projection along a single unit vector $w$ ($\|w\| = 1$), mapping points in $\Reals^d$ to points in $\Reals^{d-1}$. For a point $p_i \in P$, we denote its projection as $p'_i = p_i - (p_i \cdot w) w$, where $(p_i \cdot w)$ denotes the dot product between the vectors $p_i, w \in \Reals^d$. 
To assign coordinates to $p'_i$ in $\Reals^{d-1}$, we need to establish a basis for the projected space. We therefore often consider $p'_i$ to lie in the original space $\Reals^d$, where the coordinates of $p'_i$ are restricted to the hyperplane that is orthogonal to $w$ and passes through the origin. Sometimes we will consider projections along multiple vectors $w_1, \ldots, w_r$. In that case we assume that $\{w_j\}_{j=1}^r$ form an orthonormal system, such that we can write the projection as $p'_i = p_i - \sum_{j=1}^r (p_i \cdot w_j) w_j$. Again, we assume that $p'_i$ still lies in $\Reals^d$, but is restricted to the $(d-r)$-dimensional flat that is orthogonal to $w_1, \ldots, w_r$ and passes through the origin. 

We consider different models for defining what is easy or difficult to classify, resulting in different computational problems. These models typically rely on a form of ``separability'' between two point sets. For a specific definition of separability, using a slight abuse of notation, we will often state that a property $a_i$ is separated in a point set $P$ when we actually mean that $P^i_{-}$ and $P_{+}^i$ are separated (see Figure~\ref{fig:points_and_separators} for a simple example in $\Reals^2$). The specific models, along with the relevant definitions, are described in detail in the respective sections.  

\mypar{Contributions and organization}
In Section~\ref{sec:LinearSep} we consider linear separability as the classification model. We first show that, if even one possible label is missing from $P$, then there may be no projection that eliminates the linear separability of $a_1$ whilst keeping the linear separability of the other properties. On the other hand, if all possible labels are present in the point set, then we show that it is always possible to achieve this goal. In Appendix~\ref{app:inseparable} we discuss a related question: given a measure to quantify how far removed a labeled point set is from linear separability, how can we optimize this measure for $a_1$ after projection? We show that the optimal projection can be computed efficiently under certain specific conditions, but may be hard to compute efficiently in general.
In Section~\ref{sec:GenSep} we introduce $(b, c)$-separability, which is a generalization of linear separability. Although a single projection is no longer sufficient to avoid $(b, c)$-separability of $a_1$ after projection, we show that, in general, the number of projections needed to achieve this is linked to the Helly number of the respective separability predicate. We then establish bounds on the Helly numbers of $(b,c)$-separability for specific values of $b$ and $c$. Omitted proofs can be found in Appendix~\ref{app:omitted}.

\section{Linear separability}
\label{sec:LinearSep}

In this section we consider linear separability for classification. For a point set $P$ and property $a_i\colon P \rightarrow \{-1, 1\}$, we say that $a_i$ is easy to classify on $P$ if $P^i_{-}$ and $P^i_{+}$ are (strictly) linearly separable; we say that $a_i$ is difficult to classify otherwise. Two point sets $P$ and $Q$ ($P, Q \subset \Reals^d$) are \emph{linearly separable} if there exists a hyperplane $H$ separating $P$ from $Q$. The point sets are \emph{strictly linearly separable} if we can additionally require that none of the points lie on $H$. Equivalently, the point sets $P$ and $Q$ are linearly separable if there exists a unit vector $v \in \Reals^d$ and constant $c \in \Reals$ such that $(v \cdot p) \leq c$ for all $p \in P$ and $(v \cdot q) \geq c$ for all $q \in Q$ ($v$ is the normal vector of the hyperplane $H$). We say that $P$ and $Q$ are linearly separable \emph{along} $v$. If the inequalities can be strict, then the point sets are strictly linearly separable.

One of the machine learning techniques that use linear separability for classification are \emph{support vector machines} (SVMs). SVMs compute the (optimal) hyperplane that separates two classes in the training data (if linearly separable), and use that hyperplane for further classifications. Linear separability is therefore a good first model to consider for classification.



Let $\CH(P)$ denote the convex hull of a point set $P$. By definition, we have that $x \in \CH(P)$ if and only if there exist coefficients $\lambda_i \geq 0$ such that $x = \sum_{i=1}^n \lambda_i p_i$ and $\sum_{i=1}^n \lambda_i = 1$. We use the following basic results on convex geometry and linear algebra.
\begin{fact}\label{fac:convexlinsep}
Two point sets $P$ and $Q$ are linearly separable iff $\CH(P)$ and $\CH(Q)$ are interior disjoint. $P$ and $Q$ are strictly linearly separable iff $\CH(P) \cap \CH(Q) = \emptyset$.
\end{fact}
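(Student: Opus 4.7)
The plan is to derive both equivalences from the separating hyperplane theorem together with the fact that linear inequalities are preserved under convex combinations.

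For the forward directions, suppose $P$ and $Q$ are linearly separable along $v$ with constant $c$, i.e.\ $v \cdot p \leq c$ for all $p \in P$ and $v \cdot q \geq c$ for all $q \in Q$. Because the dot product is linear, for any $x = \sum_i \lambda_i p_i$ with $\lambda_i \geq 0$ and $\sum_i \lambda_i = 1$ we get $v \cdot x = \sum_i \lambda_i (v \cdot p_i) \leq c$, so $v \cdot x \leq c$ for all $x \in \CH(P)$, and analogously $v \cdot y \geq c$ for all $y \in \CH(Q)$. Therefore $\CH(P) \cap \CH(Q)$ is contained in the hyperplane $\{x : v \cdot x = c\}$, which has empty interior in $\Reals^d$; hence $\CH(P)$ and $\CH(Q)$ are interior disjoint. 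For the strict case, strict inequalities $v \cdot p_i < c$ remain strict after any convex combination (since the $\lambda_i$ are nonnegative and sum to $1$), so no point of $\CH(P)$ can equal any point of $\CH(Q)$, giving $\CH(P) \cap \CH(Q) = \emptyset$.

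For the reverse directions, I would invoke standard convex-geometric tools. As $P$ and $Q$ are finite, $\CH(P)$ and $\CH(Q)$ are compact convex polytopes. If their interiors are disjoint, the separating hyperplane theorem (Hahn--Banach in $\Reals^d$) yields a hyperplane $H$ with normal $v$ such that $\CH(P)$ lies in one closed halfspace of $H$ and $\CH(Q)$ in the other; restricting to $P \subseteq \CH(P)$ and $Q \subseteq \CH(Q)$ gives linear separability. If additionally $\CH(P) \cap \CH(Q) = \emptyset$, then compactness of both sets makes their distance strictly positive, so a slight translation of $H$ produces a hyperplane with $P$ and $Q$ strictly on opposite sides, establishing strict linear separability.

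The only subtlety I anticipate is picking the right form of the separating hyperplane theorem for each half: the non-strict version requires only convexity, whereas upgrading disjointness to strict separation needs compactness of at least one of the two convex sets, which is guaranteed here because $P$ and $Q$ are finite. Beyond that, the argument is a direct unpacking of the linearity of $v \cdot (\cdot)$ on convex combinations.
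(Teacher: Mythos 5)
The paper states this as a \emph{Fact} with no proof (it is introduced as one of the ``basic results on convex geometry'' being taken for granted), so there is nothing of the authors' to compare against; your argument has to stand on its own. Your overall plan --- linearity of $v \cdot (\cdot)$ on convex combinations for the forward directions, separation theorems for the reverse directions --- is the standard route, and the forward directions are fine.

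There is, however, a gap in the reverse direction of the first equivalence. From ``the interiors of $\CH(P)$ and $\CH(Q)$ are disjoint'' you cannot invoke the separating hyperplane theorem in the form you cite: every standard version requires either disjointness of the two sets, or disjointness of their \emph{relative} interiors, or an interior point of one set lying outside the other. If both hulls are lower-dimensional, their ambient interiors are empty, the hypothesis ``interior disjoint'' is vacuously satisfied, and the conclusion can fail: two segments in $\Reals^2$ crossing transversally at a point in the relative interior of each are interior disjoint in the ambient sense, yet they are not linearly separable, since any separating line would have to pass through the crossing point and each segment would then have points strictly on both of its sides. So the step ``interiors disjoint $\Rightarrow$ Hahn--Banach gives a separator'' is not a valid appeal to a known theorem; to make it one you must either restrict to full-dimensional hulls or work with relative interiors and proper separation, and say which reading of the Fact you are proving. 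A second, smaller slip is in the strict reverse direction: a ``slight translation'' of an arbitrary non-strict separator $H$ need not strictly separate --- if both hulls touch $H$ at distinct points, no translate of $H$ works. The clean fix is to take $p^{*} \in \CH(P)$ and $q^{*} \in \CH(Q)$ realizing the (positive, by compactness) minimum distance between the two hulls and use the hyperplane through their midpoint orthogonal to $q^{*} - p^{*}$, or simply to cite strict separation of disjoint compact convex sets directly rather than deriving it from the non-strict separator.
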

\begin{observation}\label{obs:projectcoeff}
Let $P' = \{p'_1, \ldots, p'_n\}$ be the point set obtained from $P = \{p_1, \ldots, p_n\}$ by projecting along a unit vector $w$. If $x = \sum_{i = 1}^n \lambda_i p_i$ (for $\lambda_i \in \Reals$), then $x' = x - (w \cdot x) w = \sum_{i = 1}^n \lambda_i p'_i$. Specifically, if $x \in \CH(P)$, then $x' \in \CH(P')$.
\end{observation}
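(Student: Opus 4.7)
The plan is to reduce the statement to the linearity of the projection map. Define $\pi\colon \Reals^d \to \Reals^d$ by $\pi(y) = y - (w \cdot y)\, w$. The dot product is bilinear and scalar multiplication by $w$ is linear, so $\pi$ is a linear map: for any scalars $\alpha, \beta$ and vectors $y, z$, one checks $\pi(\alpha y + \beta z) = \alpha\pi(y) + \beta\pi(z)$ by expanding $w \cdot (\alpha y + \beta z) = \alpha(w \cdot y) + \beta(w \cdot z)$ and redistributing.

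Once linearity is in hand, the first claim is immediate: applying $\pi$ to both sides of $x = \sum_i \lambda_i p_i$ gives
\[
x' \;=\; \pi(x) \;=\; \pi\!\left(\sum_{i=1}^n \lambda_i p_i\right) \;=\; \sum_{i=1}^n \lambda_i\, \pi(p_i) \;=\; \sum_{i=1}^n \lambda_i\, p'_i,
\]
which is exactly the required identity, with $x' = x - (w \cdot x)\, w$ following from the definition of $\pi$.

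For the second claim, assume $x \in \CH(P)$. By definition there exist coefficients $\lambda_i \geq 0$ with $\sum_{i=1}^n \lambda_i = 1$ and $x = \sum_{i=1}^n \lambda_i p_i$. The identity just derived then exhibits $x'$ as the convex combination $x' = \sum_{i=1}^n \lambda_i p'_i$ using the \emph{same} coefficients, so $x' \in \CH(P')$.

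There is no real obstacle here; the only subtlety worth stating explicitly is that linear combinations (and in particular convex combinations) commute with the projection because $\pi$ is linear, so no new coefficients need to be produced. The argument extends verbatim to projecting along an orthonormal system $w_1, \ldots, w_r$, since $y \mapsto y - \sum_{j=1}^r (w_j \cdot y)\, w_j$ is still linear.
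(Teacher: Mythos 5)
Your proof is correct; the paper states this as an observation without proof precisely because the linearity of $y \mapsto y - (w \cdot y)w$ makes it immediate, and your argument is exactly that implicit justification, including the point that convex combinations are preserved because the same coefficients work after projection.
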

\begin{restatable}{lemma}{sepafterproj}
\label{lem:sepafterproj}
Let $P$ and $Q$ be two point sets. If we project both $P$ and $Q$ along a unit vector $w$ to obtain $P'$ and $Q'$, then $P'$ and $Q'$ are not strictly linearly separable iff there exists a line $\ell$ parallel to $w$ that intersects both $\CH(P)$ and $\CH(Q)$. If $\ell$ intersects the interior of $\CH(P)$ or $\CH(Q)$, then $P'$ and $Q'$ are not linearly separable.
\end{restatable}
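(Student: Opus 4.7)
The plan is to prove the biconditional by using Fact~\ref{fac:convexlinsep} to translate strict linear separability into the statement that $\CH(P') \cap \CH(Q') = \emptyset$, and then use Observation~\ref{obs:projectcoeff} as the bridge between convex combinations before and after projection. The key geometric picture is that two points $x, y \in \Reals^d$ project to the same point iff $x - y$ is parallel to $w$, i.e.\ iff $x$ and $y$ lie on a common line parallel to $w$.

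For the forward direction, assume $P'$ and $Q'$ are not strictly linearly separable. By Fact~\ref{fac:convexlinsep} there is a point $z \in \CH(P') \cap \CH(Q')$. Writing $z = \sum \lambda_i p'_i = \sum \mu_j q'_j$ in terms of the projected points, Observation~\ref{obs:projectcoeff} gives preimages $x = \sum \lambda_i p_i \in \CH(P)$ and $y = \sum \mu_j q_j \in \CH(Q)$ that both project to $z$. Since $x' = y' = z$, the difference $x - y$ is a scalar multiple of $w$, so the line $\ell$ through $x$ in the direction of $w$ contains $y$ as well and therefore meets both hulls. For the reverse direction, take any such $\ell$ with $x \in \ell \cap \CH(P)$ and $y \in \ell \cap \CH(Q)$; again $x - y \parallel w$ implies $x' = y'$, and Observation~\ref{obs:projectcoeff} places this common projection in $\CH(P') \cap \CH(Q')$, so $P'$ and $Q'$ are not strictly linearly separable.

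For the strengthening, I would observe that orthogonal projection onto $w^{\perp}$ is an open map: any open $d$-dimensional ball $B_{\varepsilon}(x) \subset \CH(P)$ projects onto the $(d-1)$-dimensional ball $B_{\varepsilon}(x') \cap w^{\perp}$, which is a neighborhood of $x'$ in the projected space. Hence if $\ell$ meets the interior of $\CH(P)$, one can choose $x$ in that interior, and then $x'$ lies in the interior of $\CH(P')$ while still coinciding with some $y' \in \CH(Q')$. Thus $\CH(P')$ and $\CH(Q')$ share an interior point of $\CH(P')$, so they are not interior disjoint, and Fact~\ref{fac:convexlinsep} rules out linear (not just strict linear) separability. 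The symmetric argument handles the case where $\ell$ meets the interior of $\CH(Q)$.

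I do not anticipate a real obstacle: the iff is essentially a direct chase through Fact~\ref{fac:convexlinsep} and Observation~\ref{obs:projectcoeff}. The only subtlety worth writing out carefully is the openness claim used for the second sentence, since the interior of $\CH(P)$ is taken in $\Reals^d$ whereas the interior of $\CH(P')$ is taken inside the projected hyperplane; stating explicitly that orthogonal projection maps a $d$-dimensional open ball to a $(d-1)$-dimensional open ball in $w^{\perp}$ resolves the dimensional mismatch cleanly.
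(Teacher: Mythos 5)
Your proof is correct and follows essentially the same route as the paper's: both directions of the equivalence are a direct chase through Fact~\ref{fac:convexlinsep} and Observation~\ref{obs:projectcoeff}, using the fact that points on a common line parallel to $w$ project to the same point. The paper dismisses the non-strict case as ``straightforward,'' whereas you spell out the openness of the projection; that is a welcome addition in rigor but not a different argument.
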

\begin{proof}
Assume that the line $\ell$ exists, and it contains $x_P \in \CH(P)$ and $x_Q \in \CH(Q)$ (see Figure~\ref{fig:sepafterproj}). By construction, $x' = x_P - (w \cdot x_P) w = x_Q - (w \cdot x_Q) w$. Hence, by Observation~\ref{obs:projectcoeff}, $x' \in \CH(P') \cap \CH(Q')$. Thus, by  Fact~\ref{fac:convexlinsep}, $P'$ and $Q'$ are not strictly linearly separable. For the other direction, choose $x' \in \CH(P') \cap \CH(Q')$. The line parallel to $w$ and passing through $x'$ must clearly intersect both $\CH(P)$ and $\CH(Q)$. The extension to (non-strict) linear separability is straightforward.
\end{proof}

\begin{figure}[t]
    \centering
    \includegraphics{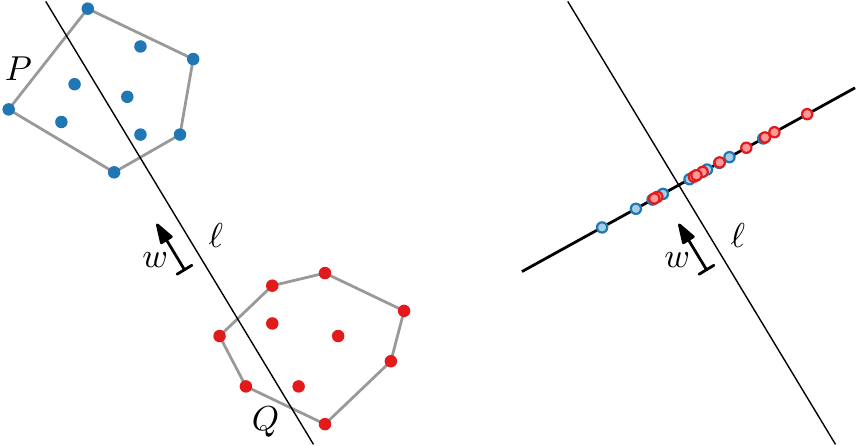}
    \caption{Line $\ell$ intersects $\CH(P)$ and $\CH(Q)$; after projection the convex hulls intersect.}
    \label{fig:sepafterproj}
\end{figure}
Assume now that the properties $a_1, \ldots, a_k$ are strictly linearly separable in $P$. Can we project $P$ along a unit vector $w$ so that $a_2, \ldots, a_k$ are still strictly linearly separable in $P'$, but $a_1$ is not? We consider two variants: (1) \emph{separation preserving} and (2) \emph{separability preserving} projections. The former preserves a fixed set of separating hyperplanes $H_2, \ldots, H_k$ for properties $a_2, \ldots, a_k$, the latter preserves only linear separability of $a_2, \ldots, a_k$.

Lemma~\ref{lem:not_all_labels} proves there exist point sets using only $2^k - 1$ possible labels for which every separability preserving projection also keeps $a_1$ strictly linearly separable after projection. The idea is to use the properties $a_2, \ldots, a_k$ to sufficiently restrict the direction of a separability preserving projection to make it impossible for this projection to eliminate the linear separability of $a_1$. A simple example for $d = k = 2$ is shown in Figure~\ref{fig:few_labels}.

\begin{figure}[h]
    \centering
    \includegraphics{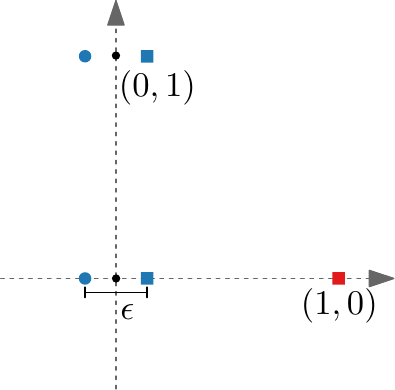}
    \caption{A point set with $5$ points and $2$ properties: $a_1$ (color) and $a_2$ (shape). To keep $a_2$ linearly separable after projection, the projection vector $w$ should be nearly vertical, but then $a_1$ will also remain linearly separable.}
    \label{fig:few_labels}
\end{figure}

\begin{restatable}{lemma}{notalllabels}
\label{lem:not_all_labels}
For all $k > 1$ and $d \geq k$, there exist point sets $P$ in $\Reals^d$ with properties $a_1, \ldots, a_k$ using $2^{k}-1$ labels such that any separability preserving projection along a unit vector $w$ also keeps $a_1$ strictly linearly separable after projection.
\end{restatable}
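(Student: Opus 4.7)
My plan is to prove the lemma by exhibiting an explicit construction and then invoking Lemma~\ref{lem:sepafterproj}. For the base case $d = k = 2$ shown in Figure~\ref{fig:few_labels}, I would place five points in $\Reals^2$: $p_1 = (-1,-1)$ and $p_2 = (-1,1)$ labeled $(-,-)$; $p_3 = (1,-1)$ and $p_4 = (1,1)$ labeled $(-,+)$; and $p_5 = (-2,0)$ labeled $(+,-)$. This set uses exactly three labels (omitting $(+,+)$), and both $a_1$ (separable by $x_1 = -3/2$) and $a_2$ (separable by $x_1 = 0$) are strictly linearly separable in $\Reals^2$.

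Next I would use Lemma~\ref{lem:sepafterproj} to reformulate separability preservation via the Minkowski difference $M_i = \CH(P_+^i) - \CH(P_-^i)$: property $a_i$ is preserved after projection along a unit $w$ iff $w$ (up to sign) is not the direction of any nonzero vector in $M_i$. Direct computation gives $M_1 = \{(-2,0)\} - [-1,1]^2 = [-3,-1] \times [-1,1]$, a rectangle whose vectors occupy the angular range $[3\pi/4, 5\pi/4]$. Likewise $M_2 = (\{1\} \times [-1,1]) - \CH(\{(-1,-1),(-1,1),(-2,0)\})$ is the trapezoid with vertices $(2, \pm 2)$ and $(3, \pm 1)$, whose vectors occupy the range $[-\pi/4, \pi/4]$, antipodal to that of $M_1$. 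Hence the forbidden-direction sets coincide, $D_1 = D_2$, and every unit $w$ preserving $a_2$ also preserves $a_1$.

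For general $k > 2$ and $d \geq k$, I would extend the construction to $\Reals^k$ (padding with zeros in the remaining coordinates, which does not affect any separability argument). A natural template places $2^k$ bulk points at the vertices of $\{-1,+1\} \times \{-1,+1\}^{k-1}$, all with $a_1 = -$ and $a_i = \operatorname{sign}(x_i)$ for $i \geq 2$, together with $2^{k-1} - 1$ ``arms'' at positions $(-2, s_2, \ldots, s_k)$ labeled $(+, s_2, \ldots, s_k)$ for $(s_2, \ldots, s_k) \in \{-1,+1\}^{k-1} \setminus \{(+,\ldots,+)\}$; this realizes exactly $2^k - 1$ distinct labels. The main obstacle is to verify $D_1 \subseteq \bigcup_{i=2}^k D_i$: every direction that destroys $a_1$'s separability must also destroy some $a_i$ with $i \geq 2$. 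For $k = 2$ this follows immediately from the coincidence $D_1 = D_2$; for $k \geq 3$ the Minkowski-difference analysis is combinatorially more involved, and degenerate projection directions (such as $w = e_1$) typically require small additions to the template to ensure that they break some $a_i$ with $i \geq 2$. The bulk of the proof would then consist of a case analysis of the Minkowski differences in $\Reals^k$, following the blueprint established for $k = 2$.
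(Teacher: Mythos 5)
Your $k=d=2$ base case is correct: recasting Lemma~\ref{lem:sepafterproj} as ``projection along $w$ destroys strict separability of $a_i$ iff $w$ is parallel to some vector of $M_i = \CH(P^i_{+}) - \CH(P^i_{-})$'' is valid, your computation of $M_1$ and $M_2$ checks out, and padding with zeros does handle $d>2$ there since every $M_i$ stays inside the original $2$-plane. The genuine gap is the general case $k\ge 3$, which you leave as ``a case analysis following the blueprint'' --- but your template provably fails there, and not merely at isolated degenerate directions. In your template every point of $P^i_{+}$ (for $i\ge2$) has $i$-th coordinate $+1$ and every point of $P^i_{-}$ has $i$-th coordinate $-1$, so every $m\in M_i$ satisfies $m_i=2$ while $|m_1|\le 3$; hence $D_i$ contains no direction within a fixed angular distance of $e_1$. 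Meanwhile, for $k\ge3$ the origin lies in the convex hull of $\{-1,1\}^{k-1}\setminus\{(1,\dots,1)\}$, so $M_1=\CH(\text{arms})-[-1,1]^k$ contains $-e_1=(-2,0,\dots,0)-(-1,0,\dots,0)$ together with a relatively open cone of directions around it. Consequently there is an \emph{open} set of unit vectors near $e_1$ whose projections preserve all of $a_2,\dots,a_k$ yet destroy $a_1$ --- precisely what the lemma must exclude. The exact coincidence $D_1=D_2$ that powers your $k=2$ argument is an artifact of two dimensions and does not survive; fixing this is not a ``small addition'' but requires a mechanism that prevents separability-preserving projections from having any significant component along the axis separating $a_1$.

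The paper's construction supplies exactly such a mechanism and is structurally different from yours. It places two copies of a tiny $(k-1)$-cube $C_\epsilon$ (side length $\epsilon$, labels $a_i=\sgn(p^{i-1})$, all with $a_1=+1$) at unit distance apart along an extra coordinate axis; because the copies are far apart relative to $\epsilon$, Lemma~\ref{lem:sepafterproj} forces every separability-preserving $w$ to have all of its first $k-1$ components bounded by $\epsilon$. The $2^k-1$ labels are then realized by adding $a_1=-1$ points beyond a flat $H$ at distance $1$ from the origin that meets exactly $2^{k-1}-1$ orthants, and a margin argument (the normal of $H$ lies in the first $k-1$ coordinates, where $w$ is $O(\epsilon)$) shows the projection perturbs the $a_1$-margin by only $O(\epsilon)$. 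To complete your proof you would need to import this kind of two-scale idea rather than extend the Minkowski-difference case analysis.
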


We now assume that all $2^k$ labels are used in $P$. Note that this assumption directly implies that $d \geq k$: take any set of $k$ separating hyperplanes $H_1, \ldots, H_k$ for the $k$ properties and consider the arrangement formed by the hyperplanes in $\Reals^d$. Clearly, all points in the same cell of the arrangement must have the same label. However, it is well-known that it is not possible to create $2^k$ cells in $\Reals^d$ with only $k$ hyperplanes if $d < k$. This has also interesting implications for the case when $d = k$: if we apply a separation preserving projection to $P$, then $a_1$ cannot be linearly separable in $P'$, since $P'$ is embedded in $\Reals^{k-1}$. 

We now show that, if $d \geq k$, then there always exists a separation preserving projection that eliminates the strict linear separability of $a_1$ (see  Figure~\ref{fig:lemma3}). Our proof uses Kirchberger's theorem~\cite{kirchberger1903}. Below we restate this theorem in our own notation. We also include our own proof, since the construction in the proof is necessary for efficient computation of our result.

\begin{restatable}[\cite{kirchberger1903}]{theorem}{kirchberger}
\label{thm:kirchberger}
Let $P$ and $Q$ be two points sets in $\Reals^d$ such that $\CH(P) \cap \CH(Q) \neq \emptyset$. Then there exist subsets $P^{*} \subseteq P$ and $Q^{*} \subseteq Q$ such that $\CH(P^{*}) \cap \CH(Q^{*}) \neq \emptyset$ and $|P^{*}| + |Q^{*}| = d+2$.
\end{restatable}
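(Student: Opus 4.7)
The plan is to use the classical lifting argument (essentially Carath\'eodory/Radon applied to a dimensional lift), chosen because it is constructive and can be executed by linear algebra in polynomial time, which is what the algorithmic downstream use requires.

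First I would lift each point to $\Reals^{d+1}$: set $\hat p_i := (p_i, 1)$ for $p_i \in P$ and $\hat q_j := -(q_j, 1)$ for $q_j \in Q$. The hypothesis $\CH(P) \cap \CH(Q) \neq \emptyset$ gives $\lambda_i, \mu_j \geq 0$ with $\sum_i \lambda_i = \sum_j \mu_j = 1$ and $\sum_i \lambda_i p_i = \sum_j \mu_j q_j$, and this packages into the single identity $\sum_i \lambda_i \hat p_i + \sum_j \mu_j \hat q_j = 0$ in $\Reals^{d+1}$ (first $d$ coordinates: equality of the affine combinations; last coordinate: $1 - 1 = 0$). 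Restricting to indices with strictly positive weight produces an initial pair $P^{*}, Q^{*}$ with a certificate of intersection.

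Next, if $|P^{*}| + |Q^{*}| > d+2$ then these lifted vectors are linearly dependent in $\Reals^{d+1}$, so there exists a nontrivial relation $\sum_i \alpha_i \hat p_i + \sum_j \beta_j \hat q_j = 0$. I would then form the one-parameter family $\lambda_i(t) := \lambda_i + t\alpha_i$ and $\mu_j(t) := \mu_j + t\beta_j$; by linearity this family satisfies the same vanishing identity for every $t$. Starting from $t = 0$ with all coefficients strictly positive and $(\alpha, \beta) \neq 0$, the interval of $t$ for which all $\lambda_i(t), \mu_j(t)$ are non-negative has a finite boundary $t^{*}$ at which some coefficient vanishes; deleting the corresponding point yields a strictly smaller valid certificate.

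The main subtlety is ensuring the reduction never empties one of $P^{*}$ or $Q^{*}$. Reading off the last coordinate of the identity for the family gives $\sum_i \lambda_i(t) = \sum_j \mu_j(t)$ for every $t$, so one side vanishes iff the other does, which would force $(\alpha, \beta)$ to be proportional to $(\lambda, \mu)$. Whenever $|P^{*}| + |Q^{*}| \geq d+3$, however, the kernel of the lifted matrix has dimension at least $(d+3) - (d+1) = 2$, so a dependency not parallel to $(\lambda, \mu)$ always exists and this degenerate collapse can be avoided. Iterating drives $|P^{*}| + |Q^{*}|$ down to at most $d+2$; if the final size is smaller, one can pad with arbitrary extra points from $P$ or $Q$ to reach exactly $d+2$, completing the proof.
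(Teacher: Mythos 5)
Your proof is correct, and at its core it is the same Carath\'eodory-style reduction as the paper's: perturb the coefficients of an intersection certificate along a linear dependency until one coefficient vanishes, delete that point, and repeat until only $d+2$ points remain. The difference is in how the dependency is set up. The paper works directly in $\Reals^d$ and imposes the three conditions $\sum_i a_i p_i = \sum_j b_j q_j$, $\sum_i a_i = 0$, and $\sum_j b_j = 0$ ($d+2$ homogeneous constraints, hence solvable nontrivially once $n+m \geq d+3$); because both coefficient sums are preserved separately, the certificate stays a pair of genuine convex combinations at every step and no collapse is possible. Your lifted version $\hat p_i = (p_i,1)$, $\hat q_j = -(q_j,1)$ only enforces $\sum_i \alpha_i = \sum_j \beta_j$, so the common sum may drift away from $1$ and could in principle hit $0$, wiping out both sides simultaneously; you correctly identify that this happens only when $(\alpha,\beta)$ is proportional to $(\lambda,\mu)$ and rule it out by picking a second, independent kernel vector (available since the kernel has dimension at least $2$ when $n+m \geq d+3$). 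So your route trades the paper's slightly larger constraint system for a cleaner linear-dependence statement plus one extra degeneracy argument; both are constructive and yield the same polynomial-time reduction. One cosmetic remark: after the collapse is excluded you should note that the surviving coefficients must be renormalized by their (positive) common sum to read off an actual point of $\CH(P^{*}) \cap \CH(Q^{*})$, and, as in the paper, the final padding to exactly $d+2$ points tacitly assumes $|P|+|Q| \geq d+2$.
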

\begin{proof}
Let $|P| = n$ and $|Q| = m$. We show that, if $n + m \geq d+3$, then we can remove one of the points from either $P$ or $Q$. Pick a point $x \in \CH(P) \cap \CH(Q)$. By definition, we can find coefficients $\lambda_1, \ldots, \lambda_n \geq 0$ and $\mu_1, \ldots, \mu_m \geq 0$ such that $\sum_{i=1}^n \lambda_i p_i = x = \sum_{j=1}^m \mu_j q_j$, $\sum_{i=1}^n \lambda_i = 1$, and $\sum_{j=1}^m \mu_j = 1$. If any of these coefficients is zero, then we can remove the corresponding point whilst keeping $x$ in the intersection of the two convex hulls. Otherwise, we find nonzero coefficients $a_1, \ldots, a_n$ and $b_1, \ldots, b_m$ such that $\sum_{i=1}^n a_i p_i = \sum_{j=1}^m b_j q_j$, $\sum_{i=1}^n a_i = 0$, and $\sum_{j=1}^m b_j = 0$. As this is a linear system with $d+2$ constraints and $n+m \geq d+3$ variables, there must exist a set of nonzero coefficients that satisfy these constraints. Let $\rho_\lambda = \min\{\lambda_i/a_i\mid a_i > 0\}$, $\rho_\mu = \min \{\mu_j/b_j \mid b_j > 0\}$, and $\rho = \min(\rho_\lambda, \rho_\mu)$. Now consider the new coefficients $\lambda'_i = \lambda_i - \rho a_i$ and $\mu'_j = \mu_j - \rho b_j$. By construction we have that $\lambda'_i \geq 0$ for $1 \leq i \leq n$, $\mu'_j \geq 0$ for $1 \leq j \leq m$, $\sum_i \lambda'_i = \sum_j \mu'_j = 1$, and $\sum_i \lambda'_i p_i = \sum_j \mu'_j q_j = x'$. Additionally, one of the new coefficients is zero, and we can remove the corresponding point. We can repeat this process until $n+m = d+2$.
\end{proof}

\begin{figure}[t]
    \centering
    \includegraphics{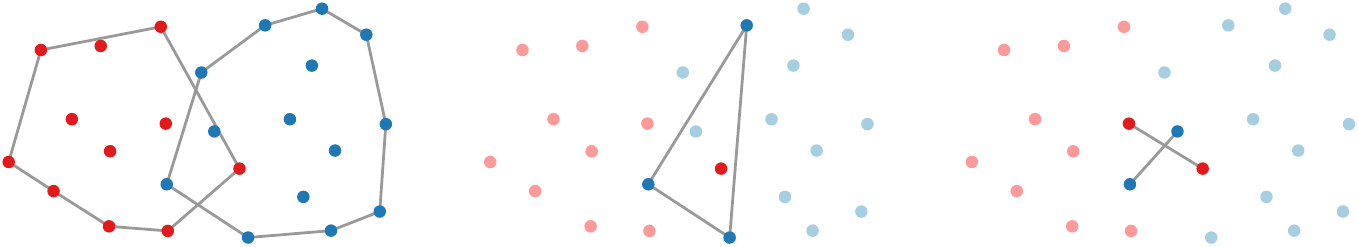}
    \caption{Theorem \ref{thm:kirchberger} in 2D: 4 points are needed to construct two intersecting convex hulls.}
    \label{fig:lemma3}
\end{figure}

\noindent The following proof constructs a suitable projection vector using four main steps:
\begin{enumerate}
    \item We project the points orthogonally onto the linear subspace $A$ spanned by the normals of the separating hyperplanes $H_2, \ldots, H_k$.
    \item We argue that, since $P$ uses all $2^k$ labels, $a_1$ is not linearly separable in $A$.
    \item We find a small subset of points $P^{*}$ for which $a_1$ is not linearly separable in $A$.
    \item We construct a separation preserving projection that maps all points in $P^{*}$ to an affine transformation of $A$. As a result, $a_1$ is not strictly linearly separable after projection.
\end{enumerate}
We assume that the points in $P$, along with the chosen separating hyperplanes, are in \emph{general position}. Specifically, we assume that any set of $d$ vectors, where each vector is either a distinct difference vector of two points in $P$ or the normal vector of one of the separating hyperplanes, is linearly independent. Note that, since all properties are initially strictly linearly separable, it is always possible to perturb the separating hyperplanes to ensure general position, assuming that $P$ is also in general position.  

\begin{theorem}\label{thm:linsep}
If $P$ is a point set in $\Reals^d$ in general position with $k \leq d$ strictly linearly separable properties $a_1, \ldots, a_k$ using all~$2^k$ labels, then there exists a separation preserving projection along a unit vector $w$ that eliminates the strict linear separability of $a_1$.
\end{theorem}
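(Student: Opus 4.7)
The plan is to realize the four-step outline just given. Let $n_i$ be the unit normal of $H_i$ for $i=2,\ldots,k$, and let $A$ be the linear span of $n_2,\ldots,n_k$; by the general position hypothesis $\dim A = k-1$. Write $\pi\colon\Reals^d\to A$ for orthogonal projection. Because each $n_i\in A$, the component of $p$ removed by $\pi$ is orthogonal to $n_i$, so $\pi(p)\cdot n_i = p\cdot n_i$ for every $i\geq 2$. Hence $\pi(p)$ lies on the same side of $H_i$ as $p$ for $i=2,\ldots,k$, and the labels $a_2,\ldots,a_k$ are faithfully inherited by $\pi(P)$.

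The key step is to show that $a_1$ is not strictly linearly separable on $\pi(P)$ inside $A$. Suppose for contradiction that a hyperplane $H_1'\subset A$ separated $\pi(P_+)$ from $\pi(P_-)$. Together with the traces $H_2\cap A,\ldots,H_k\cap A$ this yields $k$ affine hyperplanes in the $(k-1)$-dimensional space $A$, whose arrangement has at most $\sum_{i=0}^{k-1}\binom{k}{i}=2^k-1$ cells. But since $P$ realizes every one of the $2^k$ possible labels, each label determines a distinct nonempty cell in this arrangement, so we would need at least $2^k$ cells---a contradiction.

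I would then apply Theorem~\ref{thm:kirchberger} inside $A$ to $\pi(P_+)$ and $\pi(P_-)$, obtaining subsets $P_+^{*}\subseteq P_+$ and $P_-^{*}\subseteq P_-$ with $|P_+^{*}|+|P_-^{*}|=k+1$ and a witness $y\in\CH(\pi(P_+^{*}))\cap\CH(\pi(P_-^{*}))$. Expressing $y$ as a convex combination in each set and applying those same coefficients to the original points (via Observation~\ref{obs:projectcoeff}) produces $x_+\in\CH(P_+^{*})$ and $x_-\in\CH(P_-^{*})$ with $\pi(x_+)=y=\pi(x_-)$, so $x_+-x_-\in A^\perp$. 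Strict separability of $a_1$ on $P$ together with Fact~\ref{fac:convexlinsep} gives $\CH(P_+)\cap\CH(P_-)=\emptyset$, hence $x_+\neq x_-$ and I may set $w=(x_+-x_-)/\|x_+-x_-\|$.

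It then remains to verify the two properties of $w$. Since $w\in A^\perp$ we have $w\cdot n_i=0$ for every $i\geq 2$, so projecting along $w$ fixes $p\cdot n_i$ for every point $p$ and every $i\geq 2$; the hyperplanes $H_2,\ldots,H_k$ therefore still separate the respective classes. At the same time, the line through $x_+$ and $x_-$ is parallel to $w$ and meets both $\CH(P_+)$ and $\CH(P_-)$, so Lemma~\ref{lem:sepafterproj} certifies that $P_+'$ and $P_-'$ are not strictly linearly separable after projecting along $w$. The main obstacle I anticipate is the second paragraph: combining general position with a cell-counting upper bound to rule out separability inside $A$; after that, Kirchberger's theorem and Lemma~\ref{lem:sepafterproj} deliver $w$ essentially for free.
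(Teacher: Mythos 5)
Your proof is correct, and it follows the same four-step skeleton as the paper's proof (project onto the normal span $A$, show $a_1$ is not strictly separable there, extract a small witness via Kirchberger, pull back to a projection vector in $A^\perp$), but two of the steps are realized differently. For step 2 the paper normalizes $v_2,\ldots,v_k$ to the standard basis of $\Reals^{k-1}$ and argues that both $Q_-$ and $Q_+$ have a point in every orthant, hence both hulls contain the origin; you instead count cells of the arrangement of $k$ hyperplanes in the $(k-1)$-dimensional space $A$ and get at most $2^k-1 < 2^k$ cells. Both are valid (your argument is the one the paper itself invokes informally to justify $d \geq k$), and since the $H_i$ separate strictly, all projected points do lie in open cells, so the count applies. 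The more substantial difference is step 4: the paper builds $w$ by intersecting the $(k-1)$-flat $F_1$ spanned by $k$ of the $k+1$ Kirchberger points with a translate of $H^*$, which requires an extra general-position assumption to make $F_1 \cap F_2$ a single point, and then argues via an affine map from $Q_-^* \cup Q_+^*$ to the projected $P^*$. You instead pull the common witness $y$ back to $x_+ \in \CH(P_+^*)$ and $x_- \in \CH(P_-^*)$ with $\pi(x_+) = \pi(x_-)$ and set $w$ proportional to $x_+ - x_-$, which lies in $A^\perp$ automatically; Lemma~\ref{lem:sepafterproj} then finishes. This is cleaner, avoids the flat-intersection degeneracy condition, and in fact makes Kirchberger logically unnecessary for the existence of $w$ (any witness of $\CH(\pi(P_-)) \cap \CH(\pi(P_+)) \neq \emptyset$ would do); its real role in your version is to keep the witness supported on $k+1$ points, which is what the efficiency discussion and the perturbation argument of Appendix~\ref{app:compdeg} rely on, so it is worth keeping.
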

\begin{proof}
We provide an explicit construction of the vector $w$. Let $H_2, \ldots, H_k$ be any separating hyperplanes (in general position with $P$) for each of the properties $a_2, \ldots, a_k$ in $P$, respectively. Let $v_i$ be the normal of hyperplane $H_i$ for $2 \leq i \leq k$, and let $A \subset \Reals^d$ be the $(k-1)$-dimensional linear subspace spanned by $v_2, \ldots, v_k$. Furthermore, let $H^{*} = \bigcap_{i=2}^k H_i$ be the $(d-k+1)$-dimensional flat that is the intersection of the separating hyperplanes. Note that a projection along a vector $w$ is separation preserving if and only if $w$ is parallel to $H^{*}$. Let $T(p)$ be the result of an orthogonal projection of a point $p \in P$ onto $A$. For ease of argument, we also directly apply an affine transformation that maps $H^{*}$ (which intersects $A$ in one point by construction) to the origin, and maps $v_2, \ldots, v_k$ to the standard basis vectors of $\Reals^{k-1}$. 

Now define $Q_{-} = \{T(p)\mid p \in P_{-}\}$ and $Q_{+} = \{T(p)\mid p \in P_{+}\}$. By construction, since all labels are used by $P$, both $Q_{-}$ and $Q_{+}$ must have a point in each orthant of $\Reals^{k-1}$. If a point set $Q$ has a point in each orthant, then $\CH(Q)$ must contain the origin;
because if it does not, then there exists a vector $v$ such that $(v \cdot q) > 0$ for all $q \in Q$. But there must exist a point $q^{*} \in Q$ whose sign for each coordinate is opposite from that of $v$ (or zero), which means that $(v \cdot q^{*}) \leq 0$, a contradiction. Thus, both $\CH(Q_{-})$ and $\CH(Q_{+})$ contain the origin, and $\CH(Q_{-}) \cap \CH(Q_{+}) \neq \emptyset$. We now apply Theorem~\ref{thm:kirchberger} to $Q_{-}$ and $Q_{+}$ to obtain $Q_{-}^{*}$ and $Q_{+}^{*}$ consisting of $k+1$ points in total. Let $P^{*} \subseteq P$ be the corresponding set of original points that map to $Q_{-}^{*} \cup Q_{+}^{*}$. We can now construct $w$ as follows. Pick a point $p^{*} \in P^{*}$, and let $F_1$ be the unique $(k-1)$-dimensional flat that contains the remaining points in $P^{*}$. Let $F_2$ be the flat obtained by translating $H^{*}$ to contain $p^{*}$. Since $F_1$ is $(k-1)$-dimensional and $F_2$ is $(d - k + 1)$-dimensional, $F_1 \cap F_2$ consists of a single point $r \in \Reals^d$ (assuming general position). The desired projection vector is now simply $w = r - p^{*}$ (normalized if necessary).

We finally show that the constructed vector $w$ has the correct properties. First of all, $w$ is parallel to $H^{*}$ by construction, and hence the projection along $w$ is separation preserving. Second, since $r \in F_1$ and $p^{*}$ is projected to coincide with $r$ (as $w = r - p^{*}$), all points in $P^{*}$ will lie on the same $(k-1)$-dimensional flat $F'_1$ after projection. Also, since $w$ is orthogonal to $A$, there exists an affine map from $Q_{-}^{*} \cup Q_{+}^{*}$ to $P^{*}$ (after projection). Thus, we obtain that $\CH(P'_{-}) \cap \CH(P'_{+}) \neq \emptyset$; in particular, the convex hulls must intersect on $F'_1$. By Fact~\ref{fac:convexlinsep} this implies that $a_1$ is not strictly linearly separable after projection. 
\end{proof}

Although the projection of Theorem~\ref{thm:linsep} introduces a degeneracy (specifically, the points in $P^{*}$ are no longer in general position after projection), we show in Appendix~\ref{app:compdeg} how to eliminate this degeneracy without changing the result. 

\mypar{Computation}
The proof of Theorem~\ref{thm:linsep} is constructive and implies an efficient algorithm to compute the desired projection. Most steps in the construction involve simple linear algebra operations, like projections and intersecting flats (Gaussian elimination), which can easily be computed in polynomial time. The only nontrivial computational step is the application of Theorem~\ref{thm:kirchberger}, for which the proof is also constructive. If a point $x \in \CH(P) \cap \CH(Q)$ is given along with the coefficients for the convex combination, then we can simply obtain $P^{*}$ and $Q^{*}$ by repeatedly solving a linear system of equations and eliminating a point. Note that the linear system needs to involve only $d+3$ points (arbitrarily chosen), so the linear system of equations can be solved in $O(d^3)$ time, and we can eliminate a point and update the coefficients in the same amount of time. Thus, we can compute $P^{*}$ and $Q^{*}$ in $O(n d^3)$ time, where $n = |P| + |Q|$ (similar arguments were used in~\cite{DBLP:conf/soda/MeunierMSS17}). If we are not given a point in $x \in \CH(P) \cap \CH(Q)$ along with the coefficients for the convex combination, then this must be computed first. This can be computed efficiently using linear programming.

The proof of Theorem~\ref{thm:linsep} suggests how to check, if $P$ does not use all $2^k$ labels, if there exists a separability preserving projection that eliminates the linear separability of $a_1$: If we can find a set of separating hyperplanes $H_2, \ldots, H_k$ such that $\CH(P_{-})$ and $\CH(P_{+})$ intersect \emph{after} projecting them orthogonally onto the space spanned by the normals of $H_2, \ldots, H_k$, then the remainder of the proof holds. However, finding such suitable separating hyperplanes might be computationally hard in general.

\section{Generalized separability}\label{sec:GenSep}

In this section we consider a generalization of linear separability for classification. One approach to achieve more complicated classification boundaries is to use clustering: the label of a point is  determined by the label of the ``nearest'' cluster. If we use more than one cluster per class, then the resulting classification is more expressive than classification by linear separation. This approach is also strongly related to nearest-neighbor classification, another common machine learning technique: the points decompose the space into convex subsets, each of which is associated with exactly one point; given enough clusters, we can thus exactly capture this behavior. But even with few clusters (convex sets), it may be possible to reasonably approximate the decomposition by using a single cluster to capture the same of many points with the same label. Hence, Our generalized definition of separability is inspired by such clustering-based classifications, with convex sets modeling the clusters. 

\begin{figure}[b]
    \centering
    \includegraphics{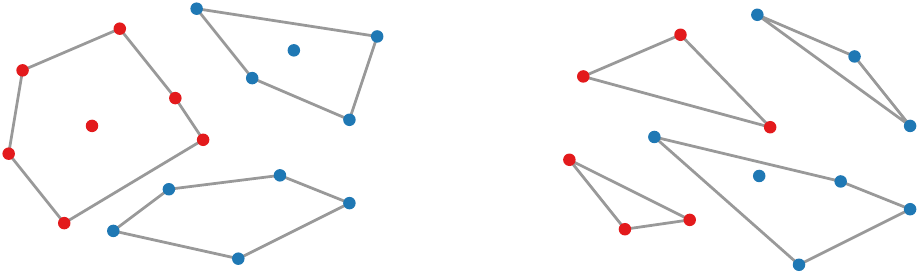}
    \caption{Left: two point sets $P$ (red) and $Q$ (blue) that are $(1,2)$-separable, but not linearly separable. Right: two point sets that are $(2,2)$-separable, but not $(1, x)$-separable for any value of $x$.}
    \label{fig:bcseparable}
\end{figure}

%
Let $P$ and $Q$ be two point sets in $\Reals^d$. We say that $P$ and $Q$ are \emph{$(b, c)$-separable} if there exist $b$ convex sets $S_1, \ldots, S_b$ and $c$ convex sets $T_1, \ldots, T_c$ such that for every point $p \in P$ we have that $p \in S = \bigcup_i S_i$, for every point $q \in Q$ we have that $q \in T = \bigcup_j T_j$, and that $S \cap T = \emptyset$ (see Figure~\ref{fig:bcseparable}). We can assume that $b \leq c$. Furthermore, we generally assume w.l.o.g.\ that any convex set $S_i$ is the convex hull of its contained points. It is easy to see that linear separability and $(1, 1)$-separability are equivalent. 




Given a point set $P$ along with $k$ properties $a_1, \ldots, a_k$, the goal is now to compute a separation preserving projection to a point set $P'$ such that $a_1$ is not $(b,c)$-separable in $P'$. We again assume that all $k$ properties are strictly linearly separable in $P$. To achieve this goal, we may need to project along multiple vectors $w_1, \ldots, w_r$. As mentioned in Section~\ref{sec:introduction}, we assume that $\{w_j\}_{j=1}^r$ form an orthonormal system and that we can compute the projected points as $p'_i = p_i - \sum_{j=1}^r (w_j \cdot p_i) w_j$. 

To extend Theorem~\ref{thm:linsep} to $(b, c)$-separability, recall the four main steps of the proof described before Theorem~\ref{thm:linsep}.
%
%
Step 3 is the most important. If $a_1$ was not linearly separable in $A$, then not even multiple separation preserving projections can eliminate the linear separability of $a_1$. In that sense, $A$ is the ``worst we can do'' with separation preserving projections. Step 3 is actually exploiting a Helly-type property~\cite{DBLP:reference/cg/Wenger04} for linear separability: If two sets of points $P$ and $Q$ are not linearly separable, then there exist small subsets $P^{*} \subseteq P$ and $Q^{*} \subseteq Q$ such that $P^{*}$ and $Q^{*}$ are not linearly separable (Theorem~\ref{thm:kirchberger}). Hence, if we use a different type of separability that also has a Helly-type property, then we may be able to use the same approach as for linear separability. Generally speaking, let $F(P, Q)$ be a predicate that determines if point sets $P, Q \subset \Reals^d$ are ``separable'' (for some arbitrary definition of separable)\footnote{We assume that $F$ is defined independently from the dimensionality of $P$ and $Q$ (like $(b, c)$-separability). We do require that $P$ and $Q$ are embedded in the same space.}. If, in the case that $F(P,Q)$ does not hold, there exist small (bounded by a constant) subsets $P^{*} \subseteq P$ and $Q^{*} \subseteq Q$ such that $F(P^{*}, Q^{*})$ also does not hold, then $F$ has the \emph{Helly-type property}. The worst-case size of $|P^{*}|+|Q^{*}|$ often depends on the number of dimensions $d$ of $P$ and $Q$, and is referred to as the \emph{Helly number} $m_F(d)$ of $F$. For technical reasons, we will require the following three natural conditions on $F$:

\begin{enumerate}
    \item If $F(P, Q)$ does not hold, then $F(P', Q')$ does not hold, where $P'$ and $Q'$ are obtained by projecting $P$ and $Q$ along a single unit vector, respectively.
    \item If $P' \subseteq P$ and $Q' \subseteq Q$, then $F(P, Q)$ implies $F(P', Q')$.
    \item If $\mathcal{A}$ is an affine map, then $F(P, Q)$ holds if and only if $F(\mathcal{A}(P), \mathcal{A}(Q))$ holds.
\end{enumerate}

We call a separation predicate $F$ \emph{well-behaved} if it satisfies these conditions. It is easy to see that $(b, c)$-separability is well-behaved. For Condition 1, note that any collection of convex sets for $P'$ and $Q'$ can easily be extended along the projection vector for $P$ and~$Q$ without introducing an overlap between $S$ and $T$. Condition 2 also holds, since we can simply use the same covering sets. Finally, Condition 3 holds since affine transformations preserve convexity. We summarize this generalization in the following generic theorem. 

\begin{restatable}{theorem}{Helly}
\label{thm:Helly}
Let $P$ be a point set in $\Reals^d$ with $k$ $(d \geq k)$ properties $a_1, \ldots, a_k$ and let $F$ be a well-behaved separation predicate in $\Reals^d$. Either we can use at most $\min(m_F(k-1) - k, d-k+1)$ separation preserving projections to eliminate $F(P_{-}, P_{+})$, or this cannot be achieved with any number of separation preserving projections. 
\end{restatable}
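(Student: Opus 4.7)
The plan is to generalise the four-step framework of Theorem \ref{thm:linsep} to any well-behaved predicate $F$. Fix separating hyperplanes $H_2, \ldots, H_k$ for $a_2, \ldots, a_k$ in general position with $P$, let $A \subset \Reals^d$ be the $(k-1)$-dimensional subspace spanned by their normals, and recall that a projection is separation preserving exactly when its direction lies in $A^\perp$ (which is parallel to $H^{*} = \bigcap_{i \geq 2} H_i$). Write $T\colon \Reals^d \to A$ for the orthogonal projection onto $A$ and set $Q_{\pm} = T(P_{\pm})$. Because $T(w) = 0$ for every $w \in A^\perp$, any sequence of separation preserving projections of $P$ produces a set $P'$ with $T(P'_{\pm}) = Q_{\pm}$; iterating Condition~1 then shows that if $F(P'_{-}, P'_{+})$ were to fail, so would $F(Q_{-}, Q_{+})$. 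Hence if $F(Q_{-}, Q_{+})$ holds in $A$, no separation preserving projection destroys $F(P_{-}, P_{+})$, giving the second alternative of the theorem.

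Assume instead that $F(Q_{-}, Q_{+})$ does not hold in $A \cong \Reals^{k-1}$. The Helly-type property of $F$ in $A$ yields subsets $Q^{*}_{\pm} \subseteq Q_{\pm}$ with $|Q^{*}_{-}| + |Q^{*}_{+}| \leq m_F(k-1)$ for which $F(Q^{*}_{-}, Q^{*}_{+})$ also fails; let $P^{*}_{\pm} \subseteq P_{\pm}$ be the corresponding preimages. By Condition~2 it suffices to destroy $F$ on $(P^{*\prime}_{-}, P^{*\prime}_{+})$ after projection, and by Condition~3 this further reduces to mapping $P^{*}$ into a single $(k-1)$-dimensional affine flat $F'$ transverse to $A^\perp$: then $T|_{F'}$ is an affine isomorphism sending $P^{*\prime}$ bijectively onto $T(P^{*\prime}) = Q^{*}$, so $F$ fails on $P^{*\prime}$ as well.

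The main step is to reduce the affine hull of $P^{*}$ to dimension $k-1$ using at most $r = \min(m_F(k-1) - k,\, d - k + 1)$ separation preserving directions. Let $V$ be the linear direction span of $P^{*}$; under general position, $\dim V = \min(|P^{*}| - 1,\, d)$ and a standard dimension count yields $\dim(V \cap A^\perp) = \min(|P^{*}| - k,\, d - k + 1) \leq r$ generically. Taking $w_1, \ldots, w_s$ to be an orthonormal basis of $V \cap A^\perp$ (so $s \leq r$) and letting $\pi$ denote the corresponding projection, $\dim V$ drops by exactly $s$, bringing the projected affine hull of $P^{*}$ to dimension $k-1$; this choice also automatically makes $F'$ transverse to $A^\perp$, since $\pi(V) \cap A^\perp = \pi(V \cap A^\perp) = \{0\}$ by construction. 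I expect the main obstacle to be verifying that $\dim(V \cap A^\perp)$ attains its generic value, which should follow from the same general-position assumptions used in Theorem \ref{thm:linsep}, possibly after a small perturbation of the separating hyperplanes $H_2, \ldots, H_k$.
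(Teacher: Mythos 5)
Your proposal follows essentially the same route as the paper's proof: project onto the subspace $A$ spanned by the normals to decide the "impossible" alternative via Condition~1, invoke the Helly-type property in $A \cong \Reals^{k-1}$ to get $Q^{*}_{\pm}$, project along an orthonormal basis of the directions of $P^{*}$ orthogonal to $A$, and finish with the affine-isomorphism argument plus Conditions~2 and~3. Your version is in fact marginally tidier in two spots — you work with the direction span of $P^{*}$ (dimension at most $|P^{*}|-1$) rather than the linear span of the points, and your single dimension count $\dim(V \cap A^\perp) \leq \min(|P^{*}|-k,\, d-k+1)$ absorbs the $d-k+1$ term that the paper handles separately by "just project all of $P$ onto $A$" — and the genericity caveat you flag is present (silently) in the paper's argument as well.
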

\begin{proof}
Following the proof of Theorem~\ref{thm:linsep}, we first orthogonally project the points in $P$ onto the $(k-1)$-dimensional linear subspace $A$ that is spanned by the normals $v_2, \ldots, v_k$ of the separating hyperplanes $H_2, \ldots, H_k$ of the properties $a_2, \ldots, a_k$. Let $T(p)$ be the resulting projected point for a point $p \in P$. Now define $Q_{-} = \{T(p)\mid p\in P_{-}\}$ and $Q_{+} = \{T(p)\mid p\in P_{+}\}$. If $F(Q_{-}, Q_{+})$ holds, then no sequence of separation preserving projections can eliminate the separability (as defined by $F$) of $a_1$, due to Condition 1 of a well-behaved separation predicate. Otherwise, we can find $Q_{-}^{*} \subseteq Q_{-}$ and $Q_{+}^{*} \subseteq Q_{+}$ such that $F(Q_{-}^{*}, Q_{+}^{*})$ does not hold, and $|Q_{-}^{*}| + |Q_{+}^{*}| \leq m_F(k-1)$. Let $P^{*} \subseteq P$ be the set of original points that map to $Q_{-}^{*} \cup Q_{+}^{*}$. The points in $P^{*}$ span a linear subspace $B$. Next, we construct an orthonormal basis $\{w_j\}_{j=1}^r$ for the set of vectors in $B$ that are orthogonal to $A$ (orthogonal to $v_2, \ldots, v_k$). Since $B$ has at most $m_F(k-1) - 1$ dimensions, and $A$ has $k-1$ dimensions, we conclude that the orthonormal basis contains $r \leq m_F(k-1) - 1 - (k-1) = m_F(k-1) - k$ vectors. We then choose to project $P$ along the vectors $w_1, \ldots, w_r$. Since every $w_j$ for $1 \leq j \leq r$ is orthogonal to $A$, these projections are all separation preserving. Furthermore, since we eliminate all vectors orthogonal to $A$ from $B$, there exists an affine map from $Q_{-}^{*} \cup Q_{+}^{*}$ to $P^{*}$ after projection. By using Condition 2 and Condition 3 of a well-behaved separation predicate, we can then conclude that $F(P'_{-}, P'_{+})$ does not hold. Alternatively, we can simply project $P$ to $A$, which requires $d - k + 1$ separation preserving projections. Hence, we need at most $\min(m_F(k-1) - k, d-k+1)$ projections.
\end{proof}

We now focus on $(b, c)$-separability for different values of $b$ and $c$. Unfortunately, not every form of $(b,c)$-separability has the Helly-type property.

\begin{restatable}{lemma}{notonetwo}
\label{lem:notonetwo}
In $d \geq 2$ dimensions, $(1,2)$-separability does not have the Helly-type property.
\end{restatable}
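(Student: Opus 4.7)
My plan is to establish the lemma by exhibiting, for every odd integer $n$, a pair $(P_n, Q_n) \subset \Reals^2$ that is not $(1,2)$-separable but whose every proper sub-pair is $(1,2)$-separable; as $n$ grows, so does the minimum size of a non-separable sub-pair, ruling out any finite Helly number. The guiding observation is that $(1,2)$-separability of $(P, Q)$ is equivalent to the hypergraph $\mathcal{H}_{P,Q}$ on vertex set $Q$ with hyperedges $\{S \subseteq Q : \CH(S) \cap \CH(P) \neq \emptyset\}$ admitting a proper 2-coloring, and the canonical graph that is not 2-colorable but whose every proper subgraph is, is the odd cycle $C_n$.

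Concretely, I would place $Q_n$ as $n = 2k+1$ equispaced points on a circle of radius $R$ around the origin, and $P_n$ as the three vertices of an equilateral triangle centered at the origin with circumradius $\rho = 2R\cos(k\pi/n) + \varepsilon$ for a small $\varepsilon > 0$. Direct trigonometry shows that the chord joining $q_i$ and $q_{(i+k)\bmod n}$ has perpendicular distance $R\cos(k\pi/n)$ from the origin, just less than the triangle's inradius $\rho/2$, so this chord meets $\CH(P_n)$ in every orientation; by contrast, chords of smaller offset $j < k$ lie at perpendicular distance $R\cos(j\pi/n)$, which exceeds $\rho$ (the key inequality $\cos((k-1)\pi/n) > 2\cos(k\pi/n)$ follows by expanding via the substitution $k\pi/n = \pi/2 - \pi/(2n)$), so they miss $\CH(P_n)$ altogether. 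Since $\gcd(n,k) = 1$ when $n = 2k+1$, the offset-$k$ pairs form a single $n$-cycle in the pair-intersection graph on $Q_n$.

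I then verify that these are the only minimal hyperedges of $\mathcal{H}_{P_n, Q_n}$. Any subset $S \subseteq Q_n$ with no offset-$k$ pair has all pair-offsets at most $k-1$, so $S$ lies in a circular arc of angular extent $(k-1)\cdot 2\pi/n < \pi$; the same trigonometric bound that kept offset-$(k-1)$ chords outside $\CH(P_n)$ places $\CH(S)$ outside $\CH(P_n)$ as well. Hence $\mathcal{H}_{P_n, Q_n}$ reduces as a hypergraph to the single odd cycle $C_n$, which is not 2-colorable, so $(P_n, Q_n)$ is not $(1,2)$-separable.

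It remains to show every proper sub-pair is $(1,2)$-separable. Removing any $q_i$ turns $C_n$ into a path, which is bipartite. Removing any triangle vertex $p_j$ shrinks $\CH(P_n)$ to a segment, and a chord-segment incidence calculation shows this segment misses at least one offset-$k$ chord (specifically, the chord whose midpoint lies on the same side of the origin as the removed vertex), so the residual crossing graph is a proper subgraph of $C_n$ and is 2-colorable. By monotonicity of $(1,2)$-separability under sub-pairs, every sub-pair of total size at most $n+2$ is then separable, while the full pair (of size $n+3$) is not; letting $n \to \infty$ yields arbitrarily large counterexamples. For $d \geq 3$, embedding the construction into a 2D subspace of $\Reals^d$ changes no separability relations. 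The main obstacle is the trigonometric verification that non-offset-$k$ subsets of $Q_n$ do not contribute unexpected minimal hyperedges and that removing any $p_j$ indeed eliminates at least one crossing from $C_n$.
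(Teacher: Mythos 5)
Your overall strategy is sound and is in essence the same mechanism the paper uses: an odd number of nearly antipodal chords of a regular $n$-gon all pass through a small convex body near the centre, the crossing pattern is an odd cycle (hence not $2$-colourable), and deleting any single outer point breaks the cycle. The paper realises the central body as $n$ points on a circle of radius $\sin(\pi/n)+\epsilon$ (which makes deletion of an \emph{inner} point also restore separability and yields Helly number at least $2n$); your fixed triangle is a legitimate simplification, and the weaker bound of $n+1$ still rules out the Helly-type property.

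There is, however, a genuine error in your verification step. The claim that a subset $S \subseteq Q_n$ with all pairwise offsets at most $k-1$ lies in a circular arc of extent $(k-1)\cdot 2\pi/n$ is false once $n \geq 9$. Take $n=9$, $k=4$ and $S = \{q_0, q_3, q_6\}$: all pairwise offsets equal $3 = k-1$, yet $S$ spans an equilateral triangle whose convex hull contains the origin and therefore contains your small triangle $\CH(P_n)$ entirely. Hence $S$ is a minimal hyperedge of $\mathcal{H}_{P_n,Q_n}$ that is not an offset-$k$ pair, and the hypergraph does not reduce to the cycle $C_n$. This breaks your argument for the separability of $(P_n, Q_n\setminus\{q_i\})$ as written, since a proper $2$-colouring of the path need not avoid these extra hyperedges. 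The proof is repairable: because $2k \equiv -1 \pmod{n}$, the two colour classes of the alternating $2$-colouring of the path are explicitly $\{q_{i-1},\dots,q_{i-k}\}$ and $\{q_{i+1},\dots,q_{i+k}\}$, i.e., two blocks of $k$ consecutive points, and for these concrete classes your arc argument does apply; you should argue about these specific classes rather than about all independent sets of $C_n$. Two smaller points: (i) you never rule out the swapped assignment in which $Q_n$ is covered by a single convex set --- this is immediate since $P_n \subset \CH(Q_n)$, but it must be said, as your hypergraph equivalence silently fixes the roles of $P$ and $Q$; (ii) the delicate claim about deleting a triangle vertex is unnecessary --- once every sub-pair with $Q^{*} \subsetneq Q_n$ is separable, any non-separable sub-pair has size at least $n+1$, which already gives an unbounded Helly number.
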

\begin{proof}
We prove the statement for $d=2$, which automatically implies it for $d > 2$. Consider a set of $n$ points $P = \{p_1, \ldots, p_n\}$ equally spaced on the unit circle, where $n$ is odd. For every point $p_i$ we can define a wedge $W_i$ formed between the rays from $p_i$ to the two opposite points on the circle (which are well defined, since $n$ is odd). By the Central Angle Theorem, the angle of this wedge is $\frac{\pi}{n}$. Furthermore, the distance of the rays to the origin is exactly $\sin\left(\frac{\pi}{n}\right)$. Now, for some $\epsilon > 0$ and for each point $p_i$, we add a point $q_i$ on the circle centered at the origin with radius $\sin\left(\frac{\pi}{n}\right) + \epsilon$, such that $q_i$ lies outside of $W_i$ to the left (counterclockwise). By construction there will also be a point $q_j$ to the right of $W_i$, added by the point $p_j$ that is the opposite point of $p_i$ on the right (clockwise) side. We choose $\epsilon$ small enough such that any wedge $W_i$ contains exactly $n-2$ points from $Q = \{q_1, \ldots, q_n\}$, having one point of $Q$ outside of $W_i$ on each side (see Figure~\ref{fig:notonetwo}).

Assume for the sake of contradiction that $P$ and $Q$ are $(1,2)$-separable. Since $Q \subset \CH(P)$, we must cover $Q$ with one set, and hence $S_1 = \CH(Q)$. Now consider $P_1 = T_1 \cap P$ and $P_2 = T_2 \cap P$. Since the line segments between a point $p_i \in P_1$ and its opposite points $p_j$ and $p_{j+1}$ intersect $\CH(Q)$, we get that $p_j$ and $p_{j+1}$ must both be in $P_2$. We can repeat this argument for all points $p_i$ to conclude that all pairs of consecutive points of $P$ must be in the same set ($P_1$ or $P_2$). Since not all points in $P$ can belong to the same set ($Q \subset \CH(P)$), we obtain a contradiction. Thus, $P$ and $Q$ are not $(1,2)$-separable.

Now consider removing a single point $p_i$ from $P$, and consider the line $\ell$ through the origin and $p_i$. The line $\ell$ splits $P\setminus\{p_i\}$ into two sets $P_1$ and $P_2$. It is easy to see that, if we pick $\epsilon$ small enough, $\CH(P_1)$ and $\CH(P_2)$ do not intersect $\CH(Q)$. Hence, $P\setminus\{p_i\}$ and $Q$ are $(1,2)$-separable. If we remove a single point $q_i$ from $Q$, then the line segment between $p_i$ and one of its opposite points $p_j$ does not intersect $\CH(Q\setminus\{q_i\})$. We can again split $P$ into $P_1$ and $P_2$ using the line $\ell$ through $p_i$ and $p_j$ (and shifted slightly towards the origin). Then it is again easy to see that, if we pick $\epsilon$ small enough, $\CH(P_1)$ and $\CH(P_2)$ do not intersect $\CH(Q\setminus\{q_i\})$. Hence, $P$ and $Q\setminus\{q_i\}$ are $(1,2)$-separable.

As a result, there exist no subsets of $P$ and $Q$ that are not $(1,2)$-separable. Thus, we get that the Helly number for $(1,2)$-separability is at least $|P| + |Q| = 2n$, and hence $(1,2)$-separability does not have the Helly-type property.
\end{proof}

\begin{figure}[t]
    \centering
    \includegraphics{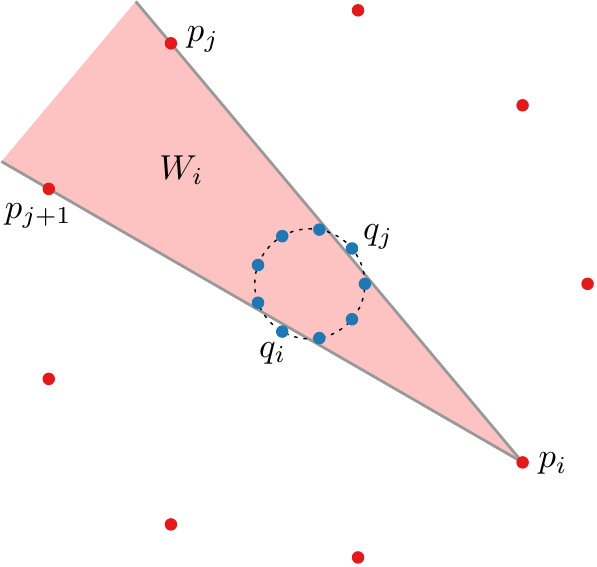}
    \caption{The construction for Lemma~\ref{lem:notonetwo} with $P$ in red and $Q$ in blue.}
    \label{fig:notonetwo}
\end{figure}

Hence we cannot apply Theorem~\ref{thm:Helly} to eliminate $(1,2)$-separability of $a_1$ in few separation preserving projections, if possible at all. However, this does not mean that it is not possible to provide this guarantee using different arguments. Nonetheless, we can use a similar construction as in the proof of Lemma~\ref{lem:notonetwo} (using many more dimensions) to show that many separation preserving projections are needed to eliminate $(1, 2)$-separability for $a_1$ (as many projections as needed to reach the $2$-dimensional construction in the proof of Lemma~\ref{lem:notonetwo}).

Next, we consider $(1, \infty)$-separability. This means that one of the point sets, say $P$, must be covered with one convex set, but we can use arbitrarily many convex sets to cover $Q$. Equivalently, $P$ and $Q$ are $(1, \infty)$-separable if $\CH(P) \cap Q = \emptyset$ or $P \cap \CH(Q) = \emptyset$.

\begin{restatable}{lemma}{oneinfinityyes}
\label{lem:one_infty_yes}
In $d \geq 1$ dimensions, $(1,\infty)$-separability has the Helly-type property with Helly number $2d+2$.
\end{restatable}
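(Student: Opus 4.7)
My plan is to establish both the upper bound $m_F(d) \leq 2d+2$ and the matching lower bound separately; combined, they give the claim. The upper bound rests on an auxiliary ``Carath\'eodory with a prescribed point'' lemma: for any $x \in \CH(P)$ and any $p \in P$, there exists $P' \subseteq P$ with $p \in P'$, $|P'| \leq d+1$, and $x \in \CH(P')$. To prove this, I would shoot the ray from $p$ through $x$ (the case $x = p$ being trivial with $P' = \{p\}$) and let $y$ be its exit point through a facet $F$ of $\CH(P)$; a short case check shows we may choose $F$ with $p \notin F$ (if $p$ were on $F$, then the segment $[p,y]$ would lie inside $F$, forcing $x$ onto the boundary, after which we redo the argument on the lower-dimensional face containing $x$). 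Standard Carath\'eodory applied to $y$ within the $(d-1)$-dimensional facet $F$ writes $y$ as a convex combination of at most $d$ points from $P$, and combining with $x \in [p, y]$ expresses $x$ as a convex combination of $p$ together with those $\leq d$ points, yielding the required $P'$.

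Using this lemma, the upper bound is quick. Suppose $(P,Q)$ is not $(1,\infty)$-separable; unpacking the definition yields witnesses $p^{*} \in P \cap \CH(Q)$ and $q^{*} \in Q \cap \CH(P)$. Apply the auxiliary lemma to $(q^{*}, p^{*})$ in $\CH(P)$ to obtain $P^{*} \subseteq P$ with $p^{*} \in P^{*}$, $|P^{*}| \leq d+1$, and $q^{*} \in \CH(P^{*})$. Symmetrically, apply it to $(p^{*}, q^{*})$ in $\CH(Q)$ to obtain $Q^{*} \subseteq Q$ with $q^{*} \in Q^{*}$, $|Q^{*}| \leq d+1$, and $p^{*} \in \CH(Q^{*})$. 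Then $q^{*} \in Q^{*} \cap \CH(P^{*})$ and $p^{*} \in P^{*} \cap \CH(Q^{*})$, so $(P^{*}, Q^{*})$ is not $(1,\infty)$-separable, with $|P^{*}| + |Q^{*}| \leq 2d+2$.

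For the lower bound $m_F(d) \geq 2d+2$, I would exhibit a minimal non-separable pair of that size. Take $P$ and $Q$ to each consist of $d+1$ vertices of a $d$-simplex in general position, arranged so that one vertex of $P$ lies strictly inside $\CH(Q)$ and one vertex of $Q$ lies strictly inside $\CH(P)$; this is easy to arrange by positioning two ``opposing'' simplices around the origin. The pair is not $(1,\infty)$-separable by construction. Removing any single point leaves only $d$ points on one side, whose convex hull is at most $(d-1)$-dimensional and hence, by general position, contains no point of the other set; thus every proper subset is separable, proving tightness.

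The main obstacle is the strengthened Carath\'eodory lemma. A naive strategy of invoking standard Carath\'eodory to express $q^{*}$ by up to $d+1$ points of $P$ that may omit $p^{*}$, and then adding $p^{*}$ separately, only gives $|P^{*}| \leq d+2$; the symmetric step then produces the weaker combined bound $2d+4$. The ray-exit refinement is precisely what is needed to force the prescribed witness point into the support without inflating its size, tightening the Helly number from $2d+4$ down to the claimed $2d+2$.
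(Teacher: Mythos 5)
Your proof is correct and follows essentially the same route as the paper: your ``Carath\'eodory with a prescribed point'' lemma (obtained by ray-shooting to a facet) is exactly what the paper gets by star-triangulating $\CH(P)$ from the witness $p^{*}$ and taking the $d$-simplex containing $q^{*}$, and the two witnesses are then combined identically to give $2(d+1)$ points. You additionally establish the matching lower bound $m_F(d) \geq 2d+2$ via two interlocking simplices in general position, a tightness argument the paper's proof does not spell out.
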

\begin{proof}
Let $P$ and $Q$ be point sets in $\Reals^d$ such that $P$ and $Q$ are not $(1, \infty)$-separable. Then there must be a point $p^{*} \in \CH(Q)$ and a point $q^{*} \in \CH(P)$. We can construct a star triangulation $\mathcal{T}(P)$ of $\CH(P)$ with $p^{*}$ as center (that is, all $d$-dimensional simplices have $p^{*}$ as a vertex) and a star triangulation $\mathcal{T}(Q)$ of $\CH(Q)$ with $q^{*}$ as center (see Figure~\ref{fig:one_infty_yes}). We identify the unique simplex $\sigma_P \in \mathcal{T}(P)$ that contains $q^{*}$, and similarly the unique simplex $\sigma_Q \in \mathcal{T}(Q)$ that contains $p^{*}$. Now let $P^{*} \subseteq P$ be the vertices of $\sigma_P$ and let $Q^{*} \subseteq Q$ be the vertices of $\sigma_Q$. Note that $p^{*} \in P^{*}$ and $q^{*} \in Q^{*}$. Then $P^{*}$ and $Q^{*}$ are not $(1, \infty)$-separable, since $q^{*} \in \CH(P^{*}) \cap Q^{*}$ and $p^{*} \in \CH(Q^{*}) \cap P^{*}$. Finally, since a $d$-dimensional simplex contains $d+1$ vertices, we obtain Helly number $2d + 2$.
\end{proof}

\begin{figure}[t]
    \centering
    \includegraphics{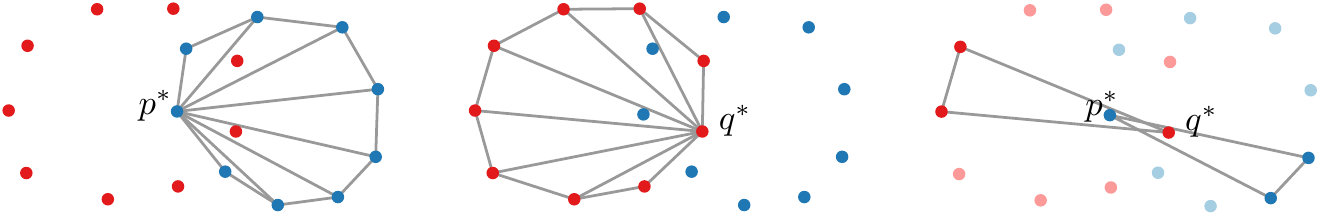}
    \caption{Lemma \ref{lem:one_infty_yes}: constructing a small point set that is not $(1, \infty)$-separable.}
    \label{fig:one_infty_yes}
\end{figure}

\begin{corollary}\label{cor:one_infty}
Let $P$ be a point set in $\Reals^d$ with $k$ $(d \geq k)$ properties $a_1, \ldots, a_k$. Either we can use at most $\min(k, d-k+1)$ separation preserving projections to eliminate $(1, \infty)$-separability of $a_1$, or this cannot be achieved with any number of separation preserving projections. 
\end{corollary}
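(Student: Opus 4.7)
The plan is to derive this corollary as an immediate specialization of the generic Theorem~\ref{thm:Helly} to the specific predicate $F = (1,\infty)$-separability, using the Helly number computed in Lemma~\ref{lem:one_infty_yes}.

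First I would observe that $(1,\infty)$-separability is a well-behaved separation predicate in the sense required by Theorem~\ref{thm:Helly}. This follows from the general argument given in the paragraph preceding Theorem~\ref{thm:Helly}, which verifies Conditions 1--3 for any $(b,c)$-separability: convex sets witnessing separation in $\Reals^{d-1}$ can be extruded along the projection vector to witness separation in $\Reals^d$ (Condition~1); removing points preserves the validity of an existing cover (Condition~2); and convexity (and hence the existence of $b$ and $c$ convex covers with disjoint unions) is preserved by affine maps (Condition~3). The only thing to note is that letting $c = \infty$ poses no issue for these arguments, since in every concrete instance only finitely many sets are needed (at most $|Q|$).

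Next I would plug in the Helly number. Lemma~\ref{lem:one_infty_yes} establishes that $(1,\infty)$-separability has the Helly-type property with $m_F(d) = 2d+2$. Substituting $d \mapsto k-1$ gives $m_F(k-1) = 2k$, so $m_F(k-1) - k = k$. Theorem~\ref{thm:Helly} then states that either at most $\min(m_F(k-1)-k,\, d-k+1) = \min(k,\, d-k+1)$ separation preserving projections suffice to eliminate $F(P_{-}, P_{+})$, or no number of such projections can achieve this. This is exactly the statement of the corollary, so no further work is required.

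There is no real obstacle: the argument is a one-line substitution once the Helly number is known and the well-behavedness is noted. The only thing worth double-checking in a careful write-up is that the treatment of the unbounded parameter $c = \infty$ does not secretly break any of the three conditions, but since the witnessing covers in the proof of Lemma~\ref{lem:one_infty_yes} use only finitely many simplices, and the conditions are purely existential, there is nothing to worry about.
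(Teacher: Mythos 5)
Your derivation is exactly the one the paper intends: the corollary follows by instantiating Theorem~\ref{thm:Helly} with $F$ being $(1,\infty)$-separability, using its well-behavedness and the Helly number $m_F(k-1)=2(k-1)+2=2k$ from Lemma~\ref{lem:one_infty_yes}, so that $m_F(k-1)-k=k$. The substitution is correct and your side remark about $c=\infty$ causing no trouble for Conditions 1--3 is a reasonable (and accurate) extra check.
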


It may initially seem counter-intuitive that $(1, \infty)$-separability has the Helly-type property (requiring only few projections to eliminate $(1, \infty)$-separability), while the strictly stronger $(1, 2)$-separability does not have the Helly-type property (and may require many projections to eliminate $(1,2)$-separability). Note however that Theorem~\ref{thm:Helly} includes the clause that it simply may not be possible to eliminate separability of $a_1$ via any number of separation preserving projections. This case occurs more often with $(1, \infty)$-separability than with $(1, 2)$-separability, which explains why we can provide better guarantees on the number of projections for a strictly weaker separability condition. 

We finally briefly consider $(2, \infty)$-separability in $\Reals^2$. Two point sets $P$ and $Q$ are not $(2, \infty)$-separable in $\Reals^2$ if we need at least three convex sets disjoint from $Q$ to cover $P$ (and vice versa). This implies that $\CH(P)$ must contain at least $3$ points of $Q$; if not, then we can draw a single line through all points in $Q \cap \CH(P)$ to separate $P$ into $P_1$ and $P_2$, and $\CH(P_1)$ and $\CH(P_2)$ both cover $P$ and are disjoint from $Q$. More generally, assume that we can cover $P$ with two sets $\CH(P_1)$ and $\CH(P_2)$ that are disjoint from $Q$, and let $\ell$ be a line that separates $\CH(P_1)$ and $\CH(P_2)$ (Fact~\ref{fac:convexlinsep}). Now consider the set of all triangles $\mathcal{T}_P$ that are formed by three points of $P$ such that a point of $Q$ is contained in the triangle. We must have that $\ell$ transverses (intersects) all triangles in $\mathcal{T}_P$, otherwise the triangle is contained in $P_1$ or $P_2$, and hence there is a point of $Q$ in either $\CH(P_1)$ or $\CH(P_2)$. Furthermore, if there is point $q \in Q$ contained in, say, $\CH(P_1)$, then there is also a triangle $\Delta \in \mathcal{T}_P$ in $P_1$ (Carath\'{e}odory's theorem), and hence $\ell$ does not intersect all triangles in $\mathcal{T}_P$. Thus, $P$ and $Q$ are $(2, \infty)$-separable (assuming we cover $P$ with $2$ convex sets) if and only if there exists a line $\ell$ that transverses $\mathcal{T}_P$. As a result, if we can show a Helly-type property for line transversals of triangles, then we also obtain a Helly-type property for $(2, \infty)$-separability. Unfortunately, there is no Helly-type property for line transversals of general sets of triangles~\cite{Lewis1980}. We leave it as an open question to determine if there exists a Helly-type property for line transversals of these special sets of triangles $\mathcal{T}_P$.

\section{Conclusion}\label{sec:conclusion}
We studied the use of projections for obstructing classification of high-dimensional Euclidean point data. Our results show that, if not all possible labels are present in the data, then it may not be possible to eliminate the linear separability of one property while preserving it for the other properties. This is not surprising if a property that we aim to keep is strongly correlated with the property we aim to hide. Nonetheless, one should be aware of this effect when employing projections in practice. When going beyond linear separability, we see that the number of projections required to hide a property increases significantly in theory, and we expect a similar effect when using, for example, neural networks for classification in practice. In other words, projecting a dataset once (or few times) may not be sufficient to hide a property from a smart classifier. Projection, as a linear transformation, can however be effective in eliminating certain linear relations in the data.

One potential direction of future work is to consider other separability predicates for labeled point sets, beyond linear separability and $(b,c)$-separability. Are there other types of separability that also have the Helly-type property used in Theorem~\ref{thm:Helly}? Or is there another way to show that few projections suffice to eliminate the separability of one of the properties? There are many other types of separability (for example, via boxes or spheres) for which this can be evaluated. 

In this paper we focused on eliminating bias based on one property (such as gender). Intersectionality posits that discrimination due to multiple properties should be considered in a holistic manner, instead of one property at a time. In fact, any one property might not be a cause for discrimination, but their combination is. The following challenge arises: say we used projection successfully to eliminate the linear separability of gender. However, if we now restrict the data to one particular sub-class, for example black people, then the linear separability of gender might still be preserved within this subclass and hence discrimination against black women can still be possible. Under which conditions is it possible to eliminate the linear separability of one property not only in the full data, but also in specific (or all) subclasses? We leave this question as an open problem.

\bibliography{references}

\begin{thebibliography}{10}

\bibitem{DBLP:conf/sdm/AbbasiFSV19}
Mohsen Abbasi, Sorelle~A. Friedler, Carlos Scheidegger, and Suresh
  Venkatasubramanian.
\newblock Fairness in representation: quantifying stereotyping as a
  representational harm.
\newblock In {\em Proc. {SIAM} International Conference on Data Mining}, pages
  801--809, 2019.
\newblock \href {https://doi.org/10.1137/1.9781611975673.90}
  {\path{doi:10.1137/1.9781611975673.90}}.

\bibitem{DBLP:conf/aies/AminiSSBR19}
Alexander Amini, Ava~P. Soleimany, Wilko Schwarting, Sangeeta~N. Bhatia, and
  Daniela Rus.
\newblock Uncovering and mitigating algorithmic bias through learned latent
  structure.
\newblock In {\em Proc. {AAAI/ACM} Conference on AI, Ethics, and Society},
  pages 289--295, 2019.
\newblock \href {https://doi.org/10.1145/3306618.3314243}
  {\path{doi:10.1145/3306618.3314243}}.

\bibitem{DBLP:journals/corr/BerkHJJKMNR17}
Richard Berk, Hoda Heidari, Shahin Jabbari, Matthew Joseph, Michael~J. Kearns,
  Jamie Morgenstern, Seth Neel, and Aaron Roth.
\newblock A convex framework for fair regression.
\newblock {\em arXiv:1706.02409}, 2017.

\bibitem{Pattern-Recognition-Bishop}
Christopher~M. Bishop.
\newblock {\em Pattern Recognition and Machine Learning (Information Science
  and Statistics)}.
\newblock Springer-Verlag, 2006.

\bibitem{DBLP:conf/nips/BolukbasiCZSK16}
Tolga Bolukbasi, Kai{-}Wei Chang, James~Y. Zou, Venkatesh Saligrama, and
  Adam~Tauman Kalai.
\newblock Man is to computer programmer as woman is to homemaker? {D}ebiasing
  word embeddings.
\newblock In {\em Advances in Neural Information Processing Systems 29: Annual
  Conference on Neural Information Processing Systems}, pages 4349--4357, 2016.

\bibitem{DBLP:conf/naacl/BordiaB19}
Shikha Bordia and Samuel~R. Bowman.
\newblock Identifying and reducing gender bias in word-level language models.
\newblock In {\em Proc. Conference of the North American Chapter of the
  Association for Computational Linguistics: Human Language Technologies},
  pages 7--15, 2019.
\newblock \href {https://doi.org/10.18653/v1/n19-3002}
  {\path{doi:10.18653/v1/n19-3002}}.

\bibitem{DBLP:conf/icml/BrunetAAZ19}
Marc~E. Brunet, Colleen~A. Houlihan, Ashton Anderson, and Richard~S. Zemel.
\newblock Understanding the origins of bias in word embeddings.
\newblock In {\em Proc. 36th International Conference on Machine Learning},
  volume~97, pages 803--811, 2019.

\bibitem{Caliskan183}
Aylin Caliskan, Joanna~J. Bryson, and Arvind Narayanan.
\newblock Semantics derived automatically from language corpora contain
  human-like biases.
\newblock {\em Science}, 356(6334):183--186, 2017.
\newblock \href {https://doi.org/10.1126/science.aal4230}
  {\path{doi:10.1126/science.aal4230}}.

\bibitem{Dev_Li_Phillips_Srikumar_2020}
Sunipa Dev, Tao Li, Jeff~M. Phillips, and Vivek Srikumar.
\newblock On measuring and mitigating biased inferences of word embeddings.
\newblock In {\em Proc. AAAI Conference on Artificial Intelligence}, pages
  7659--7666, 2020.

\bibitem{pmlr-v89-dev19a}
Sunipa Dev and Jeff Phillips.
\newblock Attenuating bias in word vectors.
\newblock In {\em Proc. Machine Learning Research}, pages 879--887, 2019.

\bibitem{edwards2016censoring}
Harrison Edwards and Amos Storkey.
\newblock Censoring representations with an adversary.
\newblock {\em arXiv:1511.05897}, 2016.

\bibitem{disparate}
Michael Feldman, Sorelle~A. Friedler, John Moeller, Carlos Scheidegger, and
  Suresh Venkatasubramanian.
\newblock Certifying and removing disparate impact.
\newblock In {\em Proc. 21st ACM SIGKDD International Conference on Knowledge
  Discovery and Data Mining}, page 259–268, 2015.

\bibitem{DBLP:conf/nips/HardtPNS16}
Moritz Hardt, Eric Price, and Nati Srebro.
\newblock Equality of opportunity in supervised learning.
\newblock In {\em Advances in Neural Information Processing Systems 29: Annual
  Conference on Neural Information Processing Systems}, pages 3315--3323, 2016.

\bibitem{Yuzi-AMC2020}
Yuzi He, Keith Burghardt, and Kristina Lerman.
\newblock A geometric solution to fair representations.
\newblock In {\em Proc. AAAI/ACM Conference on AI, Ethics, and Society}, page
  279–285, 2020.
\newblock \href {https://doi.org/10.1145/3375627.3375864}
  {\path{doi:10.1145/3375627.3375864}}.

\bibitem{DBLP:journals/kais/KamiranC11}
Faisal Kamiran and Toon Calders.
\newblock Data preprocessing techniques for classification without
  discrimination.
\newblock {\em Knowledge and Information Systems}, 33:1--33, 2011.
\newblock \href {https://doi.org/10.1007/s10115-011-0463-8}
  {\path{doi:10.1007/s10115-011-0463-8}}.

\bibitem{kirchberger1903}
Paul Kirchberger.
\newblock {\"U}ber {T}chebychefsche {A}nn\"aherungsmethoden.
\newblock {\em Mathematische Annalen}, 57:509--540, 1903.
\newblock \href {https://doi.org/10.1007/BF01445182}
  {\path{doi:10.1007/BF01445182}}.

\bibitem{Lewis1980}
Ted Lewis.
\newblock Two counterexamples concerning transversals for convex subsets of the
  plane.
\newblock {\em Geometriae Dedicata}, 9:461--465, 1980.
\newblock \href {https://doi.org/10.1007/BF00181561}
  {\path{doi:10.1007/BF00181561}}.

\bibitem{pmlr-v80-madras18a}
David Madras, Elliot Creager, Toniann Pitassi, and Richard Zemel.
\newblock Learning adversarially fair and transferable representations.
\newblock In {\em Proc. 35th International Conference on Machine Learning},
  pages 3384--3393, 2018.

\bibitem{DBLP:journals/corr/abs-1908-09635}
Ninareh Mehrabi, Fred Morstatter, Nripsuta Saxena, Kristina Lerman, and Aram
  Galstyan.
\newblock A survey on bias and fairness in machine learning.
\newblock {\em arXiv:1908.09635}, 2019.

\bibitem{DBLP:conf/soda/MeunierMSS17}
Fr{\'{e}}d{\'{e}}ric Meunier, Wolfgang Mulzer, Pauline Sarrabezolles, and
  Yannik Stein.
\newblock The rainbow at the end of the line - {A} {PPAD} formulation of the
  colorful {C}arath{\'{e}}odory theorem with applications.
\newblock In {\em Proc. 28th {ACM-SIAM} Symposium on Discrete Algorithms},
  pages 1342--1351, 2017.
\newblock \href {https://doi.org/10.1137/1.9781611974782.87}
  {\path{doi:10.1137/1.9781611974782.87}}.

\bibitem{Deng2012SupportVM}
Yingjie~Tian Naiyang~Deng and Chunhua Zhang.
\newblock {\em Support Vector Machines: Optimization Based Theory, Algorithms,
  and Extensions}.
\newblock CRC Press, 2012.

\bibitem{DBLP:conf/acl/RavfogelEGTG20}
Shauli Ravfogel, Yanai Elazar, Hila Gonen, Michael Twiton, and Yoav Goldberg.
\newblock Null it out: Guarding protected attributes by iterative nullspace
  projection.
\newblock In {\em Proc. 58th Annual Meeting of the Association for
  Computational Linguistics}, pages 7237--7256, 2020.

\bibitem{skeem2016risk}
Jennifer~L. Skeem and Christopher~T. Lowenkamp.
\newblock Risk, race, and recidivism: Predictive bias and disparate impact.
\newblock {\em Criminology}, 54(4):680--712, 2016.
\newblock \href {https://doi.org/10.1111/1745-9125.12123}
  {\path{doi:10.1111/1745-9125.12123}}.

\bibitem{DBLP:reference/cg/Wenger04}
Rephael Wenger.
\newblock Helly-type theorems and geometric transversals.
\newblock In Jacob~E. Goodman and Joseph O'Rourke, editors, {\em Handbook of
  Discrete and Computational Geometry, second edition}, pages 73--96. Chapman
  and Hall/CRC, 2004.
\newblock \href {https://doi.org/10.1201/9781420035315.ch4}
  {\path{doi:10.1201/9781420035315.ch4}}.

\bibitem{DBLP:conf/aistats/ZafarVGG17}
Muhammad~Bilal Zafar, Isabel Valera, Manuel~G. Rodriguez, and Krishna~P.
  Gummadi.
\newblock Fairness constraints: Mechanisms for fair classification.
\newblock In {\em Proc. 20th International Conference on Artificial
  Intelligence and Statistics}, volume~54, pages 962--970, 2017.

\bibitem{pmlr-v28-zemel13}
Rich Zemel, Yu~Wu, Kevin Swersky, Toni Pitassi, and Cynthia Dwork.
\newblock Learning fair representations.
\newblock In {\em Proc. 30th International Conference on Machine Learning},
  pages 325--333, 2013.

\bibitem{DBLP:conf/aies/ZhangLM18}
Brian~Hu Zhang, Blake Lemoine, and Margaret Mitchell.
\newblock Mitigating unwanted biases with adversarial learning.
\newblock In {\em Proc. {AAAI/ACM} Conference on AI, Ethics, and Society},
  pages 335--340, 2018.
\newblock \href {https://doi.org/10.1145/3278721.3278779}
  {\path{doi:10.1145/3278721.3278779}}.

\bibitem{DBLP:conf/emnlp/ZhaoZLWC18}
Jieyu Zhao, Yichao Zhou, Zeyu Li, Wei Wang, and Kai~Wei Chang.
\newblock Learning gender-neutral word embeddings.
\newblock In {\em Proc. Conference on Empirical Methods in Natural Language
  Processing}, pages 4847--4853, 2018.

\end{thebibliography}

\newpage
\appendix

\section{Eliminating degeneracy}\label{app:compdeg}
The result of Theorem~\ref{thm:linsep} has one shortcoming: the resulting projected point set $P'$ is degenerate by construction and property $a_1$ may still be (non-strictly) linearly separable after projection. This is simply an artifact of the proof and can be avoided by slightly perturbing the projection vector $w$. The following lemma can be used to remedy this shortcoming. Here we again assume that, before projection, the point set $P$ and the separating hyperplanes are in general position, and hence the only degeneracy in $P'$ is the one introduced by construction.

\begin{restatable}{lemma}{projfix}
\label{lem:projfix}
Let $P$ and $Q$ be two point sets in $\Reals^d$ in general position and let $P'$ and $Q'$ be the point sets obtained by projecting $P$ and $Q$ along a vector $w$, respectively. If $\CH(P') \cap \CH(Q') \neq \emptyset$, then we can perturb $w$ to obtain projections $P''$ and $Q''$ such that $P''$ and $Q''$ are not linearly separable and $P'' \cup Q''$ is in general position. 
\end{restatable}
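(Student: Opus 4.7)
The plan is to parametrize projection directions by unit vectors $w \in S^{d-1}$ and show that the set of directions satisfying both conclusions is non-empty in every neighborhood of the given $w$. For the non-linear-separability condition, I would work with the Minkowski difference $D := \CH(P) - \CH(Q)$. Combining Lemma~\ref{lem:sepafterproj} with Fact~\ref{fac:convexlinsep}, $\CH(P') \cap \CH(Q') \neq \emptyset$ is equivalent to the line $\ell_w := \mathbb{R} \cdot w$ meeting $D$, whereas $P'$ and $Q'$ failing to be linearly separable is equivalent to $\ell_w$ meeting $\mathrm{int}(D)$. Let $U \subseteq S^{d-1}$ denote the set of directions satisfying the latter.

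I would then argue that $U$ is open and that $w \in \overline{U}$. Openness is immediate from continuity of $w \mapsto \ell_w$ together with openness of $\mathrm{int}(D)$. For closure, by hypothesis pick $x \in \ell_w \cap D$; if $x \in \mathrm{int}(D)$ then $w \in U$ directly, otherwise $x \in \partial D$, and I would use that $D$ has non-empty interior (because $P \cup Q$ spans $\Reals^d$ by general position, which holds in the setting of Theorem~\ref{thm:linsep}) and that $x \neq 0$ (because $P$ and $Q$ are strictly linearly separable in $\Reals^d$, so $0 \notin D$) to interpolate $x$ slightly toward an interior point $y_0 \in \mathrm{int}(D)$, producing $y := (1 - \epsilon) x + \epsilon y_0 \in \mathrm{int}(D)$ close to $x$; the direction $w' := y / \|y\|$ is then close to $w$ and lies in $U$.

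Finally, I would avoid degeneracies by a measure argument inside $U$. For each $d$-subset $S \subseteq P \cup Q$, the projections of $S$ become affinely dependent in the projection hyperplane iff $w$ lies in the linear span of the $d - 1$ difference vectors among $S$; by the paper's general-position hypothesis this span is a proper $(d - 1)$-dimensional subspace whose intersection with $S^{d - 1}$ is a great $(d - 2)$-sphere of Lebesgue measure zero, and analogous measure-zero subvarieties handle higher-order degeneracies. The finite union of these nullsets has measure zero, so its complement meets the open set $U$ in a dense subset, from which I pick the desired $w''$ arbitrarily close to $w$. The main technical subtlety is the perturbation step: the corner case $0 \in \partial D$ (which corresponds exactly to $P$ and $Q$ being non-strictly linearly separable in $\Reals^d$) cannot be handled by any perturbation, but this degenerate situation is implicitly excluded by the context in which the lemma is invoked.
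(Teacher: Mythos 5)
Your reduction of non-separability to the Minkowski difference $D=\CH(P)-\CH(Q)$ is a genuinely different route from the paper's proof (which perturbs the coefficients of an explicit convex combination in barycentric coordinates), but it has a real gap: the set $U$ of directions whose line meets $\mathrm{int}(D)$ can be empty precisely in the case that matters. The correct characterization is that $P'$ and $Q'$ are not linearly separable iff $0$ lies in the interior of $\pi_w(D)$ taken \emph{inside the projection hyperplane} $H_w\cong\Reals^{d-1}$; this is implied by, but not equivalent to, $\ell_w$ meeting the ambient interior $\mathrm{int}_{\Reals^d}(D)$. Moreover, your justification that $\mathrm{int}(D)\neq\emptyset$ because $P\cup Q$ spans $\Reals^d$ is false: the direction space of $D$ is $\mathrm{dir}(\CH(P))+\mathrm{dir}(\CH(Q))$, which does not contain the cross-differences $p-q$, so under general position $\dim D=\min(d,\,|P|+|Q|-2)$. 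Hence whenever $|P|+|Q|=d+1$ --- exactly the tight case forced by a Radon-type count (fewer points in general position cannot have intersecting hulls after projection to $\Reals^{d-1}$), and exactly the case to which the paper's own proof reduces --- $D$ is a $(d-1)$-dimensional body with empty ambient interior, $U=\emptyset$, and your argument proves nothing even though the lemma's conclusion still holds. Concretely, for $d=2$, $P=\{p_1\}$, $Q=\{q_1,q_2\}$ one can perturb $w$ so that $p''_1$ lies strictly between $q''_1$ and $q''_2$, yet $\mathrm{int}_{\Reals^2}(D)=\emptyset$.

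The repair is to replace $\mathrm{int}(D)$ by the relative interior of $D$ together with the transversality condition $\mathrm{span}(w)+\mathrm{dir}(\mathrm{aff}(D))=\Reals^d$: your interpolation $y=(1-\epsilon)x+\epsilon y_0$ still works with $y_0$ in the relative interior, and transversality is again a full-measure condition on $w$ by general position, so it can be folded into your (correct) measure argument for restoring general position of $P''\cup Q''$. You would also need to make explicit the contextual assumption $0\notin D$ (i.e.\ $\CH(P)\cap\CH(Q)=\emptyset$), which the lemma's statement does not provide but its ambient setting does. With these repairs the argument goes through and is arguably cleaner than the paper's explicit coefficient perturbation; as written, however, the step ``$D$ has non-empty interior'' is wrong and the proof fails in the main case.
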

\begin{proof}
Let $P = \{p_1, \ldots, p_n\}$ and $Q = \{q_1, \ldots, q_m\}$, and similarly $P' = \{p'_1, \ldots, p'_n\}$ and $Q' = \{q'_1, \ldots, q'_m\}$. We may assume that $m + n \geq d+1$, for otherwise $P$ and $Q$ do not really span $\Reals^d$. Since $\CH(P') \cap \CH(Q') \neq \emptyset$, there exist coefficients $\lambda_i \geq 0$ ($1 \leq i \leq n$) and $\mu_j \geq 0$ ($1 \leq j \leq m$) such that $\sum_i \lambda_i = 1$, $\sum_j \mu_j = 1$, and $\sum_i \lambda_i p'_i = \sum_j \mu_j q'_j$. We can ignore some points with a zero coefficient so that we have exactly $d+1$ points left, and we assume in the remainder of this proof that $m + n = d+1$. Now assume w.l.o.g. that $\lambda_1 > 0$. We use the remaining points $(P' \cup Q')\setminus\{p'_1\}$ to set up a barycentric coordinate system for the points in $P' \cup Q'$. This has the advantage that only the coordinates of $p_1$ are affected when changing the projection vector $w$. Next, we slightly perturb the coefficients to obtain $\lambda'_i > 0$ ($1 \leq i \leq n$), $\mu'_j > 0$ ($1 \leq j \leq m$) with $\lambda'_1 = \lambda_1$, $\sum_i \lambda'_i = 1$ and $\sum_j \mu'_j = 1$ (this is clearly possible). There then exist a vector $v$ (in barycentric coordinates) and $\epsilon > 0$ ($\epsilon$ can be arbitrarily small by scaling the perturbation of the coefficients) such that $\epsilon v + \sum_i \lambda'_i p'_i = \sum_j \mu'_j q'_j$. Now consider the point $p_1^{\perp}$ which has the same barycentric coordinates as $p'_1$, but then with the barycentric coordinate system defined by $(P \cup Q)\setminus\{p_1\}$. Then, by Observation~\ref{obs:projectcoeff}, we must have that $p_1 - p_1^{\perp} = \alpha w$ for some constant $\alpha \neq 0$. Now we perturb $p_1^{\perp}$ to $p^{*}$ such that $p^{*}$ has the same barycentric coordinates as $p'_1 + (\epsilon/\lambda_1) v$, but then again with the barycentric coordinate system defined by $(P \cup Q)\setminus\{p_1\}$. Additionally, we perturb $w$ to $w' = p_1 - p^{*}$. Let $P'' = \{p''_1, \ldots, p''_n\}$ and $Q'' = \{q''_1, \ldots, q''_m\}$ be the point sets obtained by projecting $P$ and $Q$ along $w'$. We then have by construction that $\sum_i \lambda'_i p''_i = \sum_j \mu'_j q''_j$. Now assume for the sake of contradiction that $P''$ and $Q''$ are linearly separable by a hyperplane $H$. Then $H$ must contain $\CH(P'')\cap\CH(Q'')$ and, consequently, all points that have a nonzero coefficient for the convex combination of a point $x \in \CH(P'')\cap\CH(Q'')$ (since all points of either $P''$ or $Q''$ lie on the same side of $H$). By construction there are $d+1$ of these points in $P'' \cup Q''$. Since $H$ is $(d-2)$-dimensional and we performed only a single projection, this also implies that there were $d+1$ points on a $(d-1)$-dimensional hyperplane in $P \cup Q$. This contradicts the assumption that $P \cup Q$ is in general position. Finally, since $\CH(P'')$ and $\CH(Q'')$ are not interior disjoint by Fact~\ref{fac:convexlinsep}, this property cannot be broken by slightly perturbing the projection vector $w'$. Thus, we can also ensure that $P'' \cup Q''$ is in general position.
\end{proof}

\section{Maximizing inseparability}\label{app:inseparable}
In this section we consider the problem of not only eliminating the linear separability of $a_1$, but additionally to maximize the ``linear inseparability'' (or overlap) of $a_1$ after projection. For that we need to define the overlap between two point sets $P$ and $Q$. For a unit vector $v$, consider the intervals $I_P(v) = \CH(\{v \cdot p_i \mid p_i \in P\}$ and $I_Q(v) = \CH(\{v \cdot q_i \mid q_i \in Q\}$. We can then define the overlap between $P$ and $Q$ along $v$ as the length of $I_P(v) \cap I_Q(v)$. Alternatively, we can define the overlap along $v$ with the cost function used by soft-margin SVMs, which is designed for data that is not linearly separable (see~\cite{Pattern-Recognition-Bishop} for more details). The overlap between two point sets $P$ and $Q$ is then defined as the minimum overlap over all (unit) vectors $v$. More precisely, for a given (projected) point set $P$, along with (implicit) property $a_1$, we use the function $g(P, v)$ to describe the overlap of $a_1$ along the vector $v$, and we refer to $g$ as the \emph{overlap function}. The overlap of $a_1$ is then defined as $\min_v g(P, v)$. Our goal is to find the projection that maximizes this overlap after projection. More precisely, if we use $P' = \pi_w(P)$ to denote the projection of a point set $P$ along the unit vector $w$, then the goal is to maximize the function $f(P, w) = \min_v g(\pi_w(P), v)$ over all separability/separation preserving projection vectors $w$. We consider the following two overlap functions $g(P, v)$ (although other options are possible):

\begin{description}
\item[Interval] For a point set $P$ and unit vector $v$, let $I_{-} = \CH(\{v \cdot p_i \mid p_i \in P_{-}\})$ and $I_{+} = \CH(\{v \cdot p_i \mid p_i \in P_{+}\})$. Then $g_{\text{int}}(P, v) = |I_{-} \cap I_{+}|$. 
\item[SVM] The goal of the soft-margin SVM optimization is to minimize $g_{\text{svm}}(P, v) = \lambda \|v\|^2 + \frac{1}{n}\sum_{i = 1}^n \max(0, 1 - a_1(p_i) (v \cdot p_i - b))$. Note that $g_{\text{svm}}$ also requires a parameter $b \in \Reals$, but we will often omit that dependence (we can assume that the overlap function minimizes over all $b \in \Reals$). Furthermore, $\lambda > 0$ is a parameter that can be set for $g_{\text{svm}}$. Finally, note that $v$ does not need to be a unit vector.
\end{description}

We first consider the variant of the problem that aims to find the optimal separability preserving projection. The vector $v$ that minimizes $g(P, v)$ for a given point set is typically computed using convex programming (in particular for SVMs, see~\cite{Deng2012SupportVM}). Note that convex programming heavily relies on the fact that there exists only one local optimum (which hence must be the global optimum). We show that, for the problem of finding the optimal separability preserving projection, there may be multiple local optima for $f(P, w)$. This eliminates the hope of finding a convex programming formulation for this problem.
\begin{theorem}\label{thm:two_local_optima}
There exists a point set $P$ in $\Reals^3$ with $2$ properties $a_1, a_2$ such that $f(P, w)$ with $g = g_{\text{svm}}$ has two local maxima when restricted to all separability preserving projection vectors $w$.  
\end{theorem}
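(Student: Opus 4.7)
The plan is to exhibit an explicit point set $P\subset\Reals^3$ with properties $a_1,a_2$ and then verify that $f(P,\cdot)$ admits two local maxima on the $1$-parameter family of separability preserving projections. First, I arrange $a_2$ so that its separating hyperplane is $H_2 = \{z=0\}$; then the set of separation preserving unit vectors $w$ is the unit circle in the $xy$-plane, which I parameterize as $w(\theta) = (\cos\theta,\sin\theta,0)$ with $\theta \in [0,\pi)$ (using $w(\theta)\sim w(\theta+\pi)$). Under the projection along $w(\theta)$, a point $(x,y,z)\in P$ maps to the $2$-dimensional point $(u(\theta),z) = (-x\sin\theta+y\cos\theta,\,z)$, so $f(P,w(\theta))$ is a continuous, $\pi$-periodic function of a single real parameter $\theta$.

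The geometric idea is to build $P$ so that two different projection angles $\theta_1,\theta_2\in[0,\pi)$ each cause a \emph{different} red/blue pair to collide in the projected plane, forcing the $2$-dimensional SVM to pay an irreducible hinge loss for that pair; for directions away from both $\theta_1$ and $\theta_2$, the projected clusters should be well-separated and the SVM cost should be low. Concretely, split $P_{-}$ and $P_{+}$ into two colliding pairs (or small clusters), placing one pair in the slab $\{z>0\}$ and the other in $\{z<0\}$. The slab separation simultaneously supplies the $a_2$-separator and gives enough slack in the $z$-coordinate to keep $a_1$ strictly linearly separable in $\Reals^3$ (for instance via a plane of the form $x+\alpha y = 0$). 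Choosing the $xy$-coordinates of the first pair to be parallel to $w(\theta_1)$, and those of the second pair to be parallel to $w(\theta_2)$, produces the two desired red/blue collisions precisely at $\theta=\theta_1$ and $\theta=\theta_2$.

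With the construction in hand, the proof reduces to a direct computation of $f(P,w(\theta))$ at $\theta_1$, at $\theta_2$, and at a few representative angles in between and outside. For each fixed $\theta$ the inner problem is a small $2$-dimensional soft-margin SVM whose optimum $(v,b)$ can be written in closed form, yielding an explicit value for $f(P,w(\theta))$. Verifying that $f$ is strictly greater at $\theta_i$ than at all points in some open neighborhood (excluding $\theta_i$) for both $i=1,2$, and that the two neighborhoods are disjoint, establishes the two required local maxima.

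The main obstacle is quantitative balance. A naive ``two colliding pairs'' configuration is easily spoiled by a third, intermediate direction that produces a balanced XOR-like $4$-point pattern; such patterns can be strictly harder to linearly separate than either individual collision, collapsing the two putative peaks into a single global maximum at the XOR angle. Avoiding this requires choosing the $z$-coordinates within each cluster so that for all $\theta\notin\{\theta_1,\theta_2\}$ the SVM can exploit the $z$-dimension to obtain a low-cost separator, which in turn requires the within-cluster $z$ values to be consistent (all reds above, all blues below, say) rather than alternating. Carrying out this balancing---and then performing the closed-form SVM computations needed to certify the strict inequalities $f(\theta_i) > f(\theta)$ in a neighbourhood of each $\theta_i$---is the technical content of the verification.
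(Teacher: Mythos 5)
There is a genuine gap, and it begins with the domain of optimization. The theorem quantifies over all \emph{separability} preserving projection vectors, which for one auxiliary property in $\Reals^3$ form a two-dimensional subset of $\mathcal{S}^{2}$ (all $w$ after which $a_2$ admits \emph{some} separating hyperplane). You instead fix $H_2=\{z=0\}$ and restrict to the one-parameter circle of \emph{separation} preserving vectors $w(\theta)$ parallel to it. Exhibiting two peaks of $f(P,w(\theta))$ on that circle does not establish two local maxima on the full domain: each candidate must also be maximal under perturbations of $w$ out of the $xy$-plane, which remain admissible in your slab construction since a small tilt keeps $a_2$ separable; and, in the mountain-pass sense the paper uses when counting local optima (cf.\ the definition in the proof of Theorem~\ref{thm:one_local_max}), you would also have to rule out a path through the two-dimensional feasible set joining $\theta_1$ to $\theta_2$ along which $f$ never drops below $\min(f(\theta_1),f(\theta_2))$. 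This is not a technicality: the paper's own proof obtains its two maxima from exactly the feature your restriction erases, namely that the set of separability preserving projections of its configuration (a cube plus one perturbed vertex) is a thin, bent, nonconvex region --- the dart $\CH(A)\setminus\CH(B)$ --- with the two maxima sitting at its two tips.

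Second, the construction is never actually produced. You give no coordinates, and you yourself identify the failure mode (an intermediate angle creating an XOR-like four-point pattern that is harder to separate than either designed collision, which would merge the two peaks into one) and then explicitly defer both the balancing needed to avoid it and the closed-form SVM evaluations that would certify $f(P,w(\theta_i))>f(P,w(\theta))$ in punctured neighborhoods of each $\theta_i$. Since that verification is precisely where the claim could fail, what you have is a plausible plan rather than a proof. To repair it, either (i) fix an explicit point set, carry out the soft-margin computations, and additionally handle perturbations of $w$ off the circle, or (ii) switch to the paper's route: choose $P$ so that the separability constraint itself carves out a nonconvex feasible region and locate the two maxima at its extremities.
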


\begin{proof}
The set $P$ mostly consists of the vertices of a unit cube with side lengths $2$ centered at the origin. We also add an extra point $p^{*} = (1-\epsilon, 1-\epsilon, 1)$ for some $\epsilon > 0$ (a point slightly moved inward from the point $p_8 = (1,1,1)$). Thus, $P$ consists of nine points $\{p_1, \ldots, p_8, p^{*}\}$. For property $a_1$ we choose that $a_1(p) = z(p)$ for all $p = (x(p), y(p), z(p)) \in P$. For property $a_2$ we have that $a_2(p) = a_1(p)$ for all $p \in P\setminus\{p_8\}$, and $a_2(p_8) = -1$. Now we limit and encode the space of possible projection vectors $w$ to $\Reals^2$ as follows. For $w = (x(w), y(w), z(w))$ with $z(w) = 0$ it is clear that a projection along $w$ will keep $a_1$ linearly separated, so we may encode all possible projections as $(x, y) = (x(w)/z(w), y(w)/z(w))$. Now consider the effect of using a projection $(x, y)$ on $P$: we may assume that the x- and y-coordinates of points $p \in P$ with $z(p) = 1$ do not change and that for the other points we obtain a shifted square: $(x(p'), y(p')) = (x(p) + 2x, y(p) + 2y)$ for all $p \in P$ with $z(p) = -1$. Let $p_1 = (-1,-1,-1)$ such that $p'_1 = (-1 + 2x, -1 + 2y)$. Furthermore, let $A$ consist of the projections of all points $p \in P$ with $a_1(p) = 1$, and let $B$ consist of the projections of all points $p \in P$ with $a_2(p) = 1$. Note that $A$ forms a square and $B$ forms a square with one of the corners pushed inwards. By Fact~\ref{fac:convexlinsep}, a projection $(x,y)$ can only be separability preserving if $p'_1 \notin \CH(B)$. By the same observation, a projection $(x,y)$ with $x \geq 0$ and $y \geq 0$ preserves the linear separability of $a_1$ if $p'_1 \notin \CH(A)$. Thus, we require that $p'_1 = (-1 + 2x, -1 + 2y) \in \CH(A)\setminus\CH(B)$. Note that $\CH(A)\setminus\CH(B)$ is a thin and nonconvex shape. The same thus holds for the domain of the projections $(x, y)$ as shown in Figure~\ref{fig:counter_example}, and hence the optimization problem is not convex. Furthermore, by evaluating the overlap function $g_{\text{svm}}$ (using $\lambda = 10$) on this domain, we can see that there are two distinct local maxima: one close to $(0, 1)$ and one close to $(1, 0)$.
\end{proof}
\begin{figure}[t]
    \centering
    \includegraphics[scale=0.5]{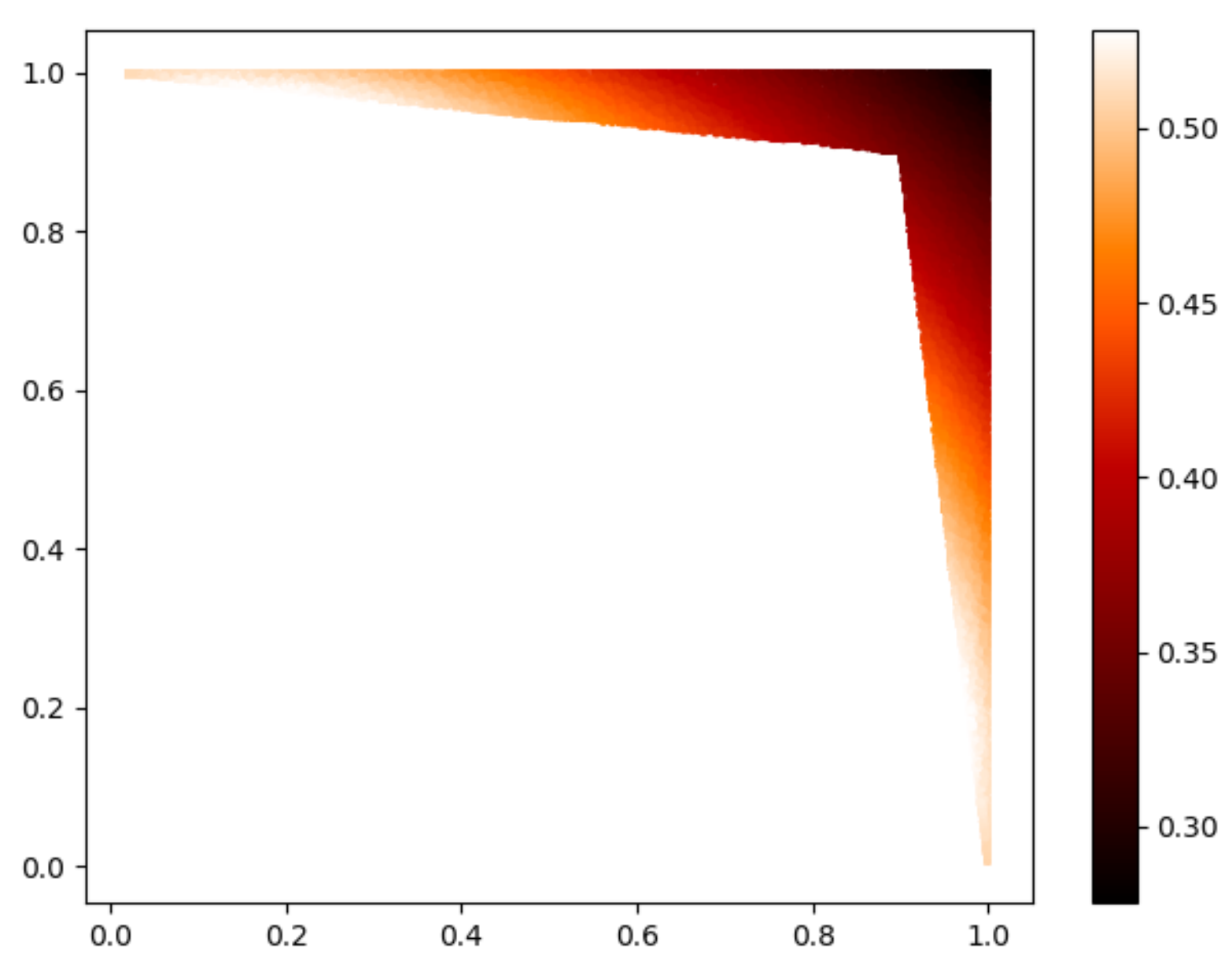}
    \caption{Illustration for Theorem~\ref{thm:two_local_optima}: The domain for projections $(x, y)$ with $\epsilon = 0.2$. Higher values in the overlap function are indicated with lighter colors. We can see two distinct local maxima.}
    \label{fig:counter_example}
\end{figure}
Theorem~\ref{thm:two_local_optima} demonstrates that the constraint on projections to be separability preserving is generally not convex. We now consider the special case that we have only one property $a_1$ (hence no separability preserving constraint), and analyze if we can then efficiently maximize $f(P, w)$. For that, we first put a restriction on the overlap function $g(P, v)$. We say that $g(P, v)$ is \emph{projectionable} if there exists a function $h\colon \Reals^n \times \Reals^d \rightarrow \Reals$ such that $g(P, v) = h(v \cdot P, v)$, where $v \cdot P = \{v \cdot p_i \mid p_i \in P\}$. In other words, $g$ should only depend on $P$ via the dot products of points in $P$ with $v$. Note that both the Interval and SVM overlap functions are indeed projectionable. For projectionable overlap functions $g$ we can redefine the optimization function $f$. In the following, let $v \perp w$ indicate that $v$ and $w$ are orthogonal.
\begin{restatable}{lemma}{projectionable}
\label{lem:projectionable}
If $g(P, v)$ is a projectionable overlap function, then\linebreak $\max_w \min_v g(\pi_w(P), v) = \max_w \min_{v \perp w} g(P, v)$ for any point set $P \subset \Reals^d$. 
\end{restatable}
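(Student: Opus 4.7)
The plan is to exploit the fact that projection along $w$ leaves the dot product with any vector orthogonal to $w$ unchanged, which combined with projectionability makes the overlap function invariant in exactly the right subspace.

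First I would observe the elementary identity: for any point $p \in P$ and any unit vector $v$ with $v \perp w$, we have
\[
v \cdot \pi_w(p) \;=\; v \cdot \bigl(p - (w \cdot p)\,w\bigr) \;=\; v \cdot p - (w \cdot p)(v \cdot w) \;=\; v \cdot p.
\]
Hence $v \cdot \pi_w(P) = v \cdot P$ as multisets. By the projectionability hypothesis, there exists $h$ with $g(P, v) = h(v \cdot P, v)$, and therefore
\[
g(\pi_w(P), v) \;=\; h(v \cdot \pi_w(P), v) \;=\; h(v \cdot P, v) \;=\; g(P, v)
\]
for every $v$ with $v \perp w$.

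Next I would address the domain of the inner minimization on the left-hand side. Formally $\pi_w(P)$ lives in the $(d-1)$-dimensional flat orthogonal to $w$ through the origin, and the min over $v$ ranges over unit vectors of that subspace. These unit vectors correspond bijectively to the unit vectors of $\Reals^d$ that are orthogonal to $w$, and the overlap function evaluated on the projected data agrees under this identification (this is where one uses that $g$ is defined independently of the ambient dimension via projectionability). Hence
\[
\min_v g(\pi_w(P), v) \;=\; \min_{v \perp w} g(\pi_w(P), v) \;=\; \min_{v \perp w} g(P, v),
\]
using the identity from the previous paragraph for the second equality. Taking $\max_w$ on both sides yields the claim.

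The main ``obstacle'' is really only bookkeeping: making sure that the inner minimum on the left is actually over vectors in the orthogonal complement of $w$ (not over all of $\Reals^d$), and that the projectionability assumption was stated strongly enough to make $g$ insensitive to how we embed the projected point set. Both are immediate from the definitions, so no substantial difficulty is expected; the proof will be short.
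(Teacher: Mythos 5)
Your proof is correct and follows essentially the same route as the paper's: the key identity $v \cdot \pi_w(p) = v \cdot p$ for $v \perp w$ combined with projectionability gives $g(\pi_w(P), v) = g(P, v)$, and the inner minimum for the projected set ranges over (or is attained at) vectors orthogonal to $w$. Your treatment of the domain of the inner minimization is slightly more explicit than the paper's one-line remark, but the argument is the same.
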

\begin{proof}
We will treat both $w$ and $v$ as a vector in $\Reals^d$. 
Since $g$ is projectionable, there exists an equivalent function $h$ that depends on $v$ and the dot products between $v$ and points in $P$. Now assume that $(v \cdot w) = 0$. Then we get
\begin{align*}
g(\pi_w(P), v) &= h(v \cdot \pi_w(P), v) \\
&= h(\{v \cdot (p_i - (w \cdot p_i) w) \mid p_i \in P\}, v)\\
&= h(\{v \cdot p_i \mid p_i \in P\}, v)\\
&= g(P, v).
\end{align*}
Since the vector $v$ that minimizes $g(\pi_w(P), v)$ must be perpendicular to $w$, we obtain the desired equality.
\end{proof}
In the following we assume that the overlap function $g$ is projectionable. Hence, by Lemma~\ref{lem:projectionable}, we can rewrite $f$ as $f(P, w) = \min_{v \perp w} g(P, v)$. This has the advantage that we can keep the point set $P$ fixed while optimizing for $w$. We now aim to link properties of $g$ to properties of $f$. As already discussed earlier, we can often find the vector $v$ that minimizes $g(P, v)$ using convex programming. This implies that $g$ has only one local minimum (for fixed $P$). We now use this fact to show that $f$ has only one local maximum.
\begin{figure}
    \centering
    \includegraphics{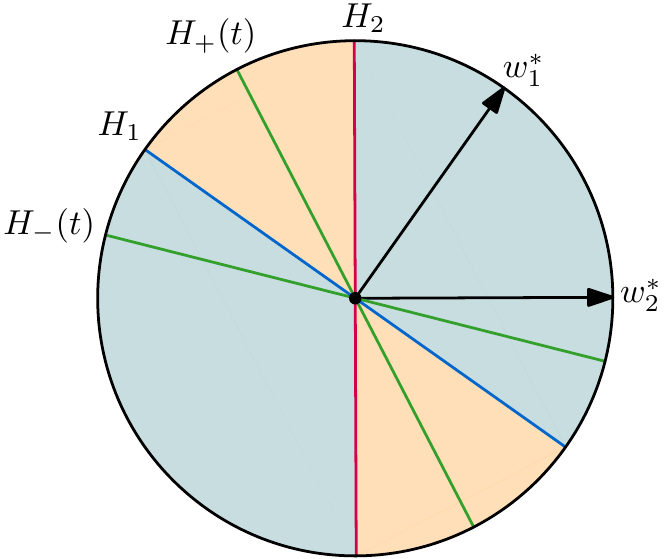}
    \caption{Illustration for Theorem~\ref{thm:one_local_max} with $d=2$: $H_{+}(t)$ can only intersect $R_{00} \cup R_{11}$ (orange, clipped to the unit disk) and $H_{-}(t)$ can only intersect $R_{01} \cup R_{10}$ (blue).}
    \label{fig:one_local_max}
\end{figure}
\begin{theorem}\label{thm:one_local_max}
If a function $g(P,v)$ has one local minimum for fixed $P$, then $f(P, w) = \min_{v \perp w} g(P, v)$ has one local maximum for fixed $P$.
\end{theorem}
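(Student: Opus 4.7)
\medskip
\noindent\textbf{Proof plan.}
The plan is to show that every super-level set $\{w : f(P,w) > t\}$ is the union of two antipodal open convex cones, so that $f$ is ``quasi-concave up to the antipodal identification $w \leftrightarrow -w$'' and every local maximum must attain the global maximum value. The first ingredient is that every sublevel set $L_t = \{v : g(P,v) \leq t\}$ of $g(P,\cdot)$ is connected: if $L_t$ had a connected component missing the unique global minimizer of $g(P,\cdot)$, then (using the mild compactness that holds for the overlap functions considered here) the minimum of $g$ on that component would be a second local minimum in $\Reals^d$, contradicting the hypothesis.

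Next I would translate the problem into super-level sets of $f$. By definition, $f(P,w) \geq t$ iff the hyperplane $V_w = \{v : v \cdot w = 0\}$ misses the strict sublevel set $\{v : g(P,v) < t\}$; since this set is open and connected, it must lie entirely in one of the two open half-spaces $\{v \cdot w > 0\}$ or $\{v \cdot w < 0\}$. Defining
\[
    C_t \;=\; \bigcap_{v \in L_t} \{w : v \cdot w > 0\},
\]
which is an open convex cone as an intersection of open half-spaces through the origin, this yields $\{w : f(P,w) > t\} = C_t \cup (-C_t)$.

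The contradiction step then runs as follows. Suppose $w_1, w_2$ are two local maxima with $t_1 := f(P,w_1) < f(P,w_2) =: t_2$. Since $f(P,w) = f(P,-w)$, after flipping signs if needed we may assume $w_2 \in C_{t_1}$. A limiting argument using $w_1 \in C_t$ for every $t < t_1$ places $w_1 \in \overline{C_{t_1}}$, while the existence of some $v_1 \perp w_1$ with $g(P,v_1) = t_1$ prevents $w_1 \in C_{t_1}$ itself. Convexity of $C_{t_1}$ then forces the open segment from $w_1$ to $w_2$ to lie inside $C_{t_1}$, producing directions arbitrarily close to $w_1$ at which $f > t_1$, contradicting the local-maximum property of $w_1$.

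The main obstacle is the boundary case $t_1 = t_2$: when both claimed local maxima share the same $f$-value, both $w_1$ and $w_2$ lie on $\partial C_{t_1}$ and the convexity argument only yields $f \geq t_1$ along the connecting segment rather than the strict inequality that gave the contradiction above. This is, however, consistent with the intended reading of ``one local maximum'' as ``one local-maximum value'' (the set of maximizing directions may be a genuine connected arc on which $f$ is constant), so ruling out distinct values is exactly what the statement requires.
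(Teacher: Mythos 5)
Your proof is essentially correct but takes a genuinely different route from the paper's. The paper argues by contradiction directly on the sphere of projection vectors: given two local maxima $w_1^{*}, w_2^{*}$ (defined as points not joinable by a path along which $f$ stays at least the smaller of the two values), it considers the two hyperplanes $H_1, H_2$ orthogonal to them, the four regions they cut out, and the two geodesic arcs from $w_1^{*}$ to $\pm w_2^{*}$; sweeping the orthogonal hyperplane along each arc forces the sublevel set $S = \{v : g(P,v) < x^{*}\}$ to meet two non-opposite regions separated by $H_1$ or $H_2$, so $S$ is disconnected and $g$ has two local minima. You instead give a global structural description: connectedness of the sublevel sets $L_t$ of $g$ (which is the same underlying lemma the paper invokes, just in the contrapositive direction) implies that every superlevel set $\{w : f > t\}$ is an antipodal pair of open convex cones, i.e.\ $f$ is quasi-concave up to the identification $w \leftrightarrow -w$, and quasi-concavity kills any second local maximum at a strictly smaller value via the segment argument. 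Your approach buys a cleaner, more conceptual statement (an explicit polar-cone description of all superlevel sets of $f$) at the cost of having to be a bit careful that $C_t$ is genuinely open and that the sign choices for $w_1$ are consistent as $t \uparrow t_1$; the paper's argument is more hands-on and stays entirely on the sphere. One remark on your ``main obstacle'': the equal-value case $t_1 = t_2$ is not a weakening of the statement but is resolved exactly by the paper's definition of distinctness --- the paper declares $w_1^{*}$ and $w_2^{*}$ to be distinct local maxima only if \emph{no} path between them keeps $f \geq \min(f(w_1^{*}), f(w_2^{*}))$, and the (radially normalized) segment you construct with $f \geq t_1$ throughout is precisely such a path, so equal-valued maximizers on $\partial C_{t_1}$ do not count as distinct. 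With that reading your argument proves the theorem as the paper intends it.
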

\begin{proof}
Note that $w \in \mathcal{S}^{d-1}$, where $\mathcal{S}^{d-1}$ is the unit $(d-1)$-sphere, and $f(P, w) = f(P, -w)$, so we will treat $w$ and $-w$ as equivalent. Similarly, if the local minimum of $g(P, v)$ is at $v = v^{*}$, then $v = -v^{*}$ may also be a local minimum, and together they will be counted as a single local minimum. For the sake of contradiction, assume that $f(P, w)$ has two distinct local maxima, one at $w = w_1^{*}$ and one at $w = w_2^{*}$ (and also at $w = -w_1^{*}$ and $w = -w_2^{*}$). We do not require that $w_1^{*}$ and $w_2^{*}$ are strict local maxima, but we do require that there exists no path $\gamma\colon [0, 1] \rightarrow \mathcal{S}^{d-1}$ with $\gamma(0) = w_1^{*}$ and $\gamma(1) = \pm w_2^{*}$ such that $f(P, \gamma(t)) \geq \min(f(P, \gamma(0)), f(P, \gamma(1)))$ for all $0 \leq t \leq 1$. Now consider the hyperplanes $H_1 = \{v \mid (v \cdot w_1^{*}) = 0\}$ and $H_2 = \{v \mid (v \cdot w_2^{*}) = 0\}$. Furthermore, let $\gamma_{+}\colon [0, 1] \rightarrow \mathcal{S}^{d-1}$ denote the shortest (hyper)spherical interpolation from $w_1^{*}$ to $w_2^{*}$, and let $\gamma_{-}\colon [0, 1] \rightarrow \mathcal{S}^{d-1}$ denote the shortest (hyper)spherical interpolation from $w_1^{*}$ to $-w_2^{*}$. Note that $\gamma_{-}$ and $\gamma_{+}$ are unique, since both $w_1^{*}$ to $w_2^{*}$ lie on a great circle on $\mathcal{S}^{d-1}$ and $w_1^{*} \neq -w_2^{*}$. The hyperplanes $H_1$ and $H_2$ split $\Reals^d$ into four parts (each hyperplane cuts $\Reals^d$ into two parts): $R_{00}, R_{01}, R_{10}$, and $R_{11}$. Now let $x^{*} = \min(f(P, w_1^{*}), f(P, w_2^{*}))$ and consider the sublevel set $S = \{v \mid g(P, v) < x^{*}\}$. By definition of $f$, $H_1$ and $H_2$ are disjoint from $S$. Now consider a vector $\gamma_{+}(t)$ for some $0 < t < 1$, and let $H_{+}(t) = \{v \mid (v \cdot \gamma_{+}(t)) = 0\}$ be the corresponding hyperplane. Similarly define $H_{-}(t)$ for $\gamma_{-}(t)$.
It is easy to see that $H_{+}(t)$ intersects either $R_{01} \cup R_{10}$ or $R_{00} \cup R_{11}$, but not both, and $H_{-}(t)$ intersects only the other region (see Figure~\ref{fig:one_local_max}). By assumption, there exist values $t_{-}^{*}$ and $t_{+}^{*}$ such that $f(P, \gamma_{-}(t_{-}^{*})) < x^{*}$ and $f(P, \gamma_{+}(t_{+}^{*})) < x^{*}$. Thus, by the definition of $f$, there must be two non-opposite regions, say $R_{00}$ and $R_{01}$, that contain a point in $S$. These points cannot be in the same connected component, as they are separated by either $H_1$ or $H_2$. Thus, $S$ has multiple (non-opposite) connected components, and hence $g(P, v)$ must have at least two local minima. This contradicts our assumption, and hence $f(P, w)$ can have at most one local maximum.
\end{proof}
Following Theorem~\ref{thm:one_local_max}, we can use a hill-climbing approach to find the optimal projection vector $w$, if there is only one property $a_1$. This same approach can be applied to find the optimal separation preserving projection for $k$ properties. In that case, the corresponding separating hyperplanes $H_2, \ldots, H_k$ each take away a degree of freedom, but otherwise do not bound the domain of $w$. More precisely, if there is only one property $a_1$, then $w \in \mathcal{S}^{d-1}$, where $\mathcal{S}^d$ is the unit $d$-sphere. If there are $k$ properties, then $w \in \mathcal{S}^{d-1} \cap H_2 \cap \ldots \cap H_k = \mathcal{S}^{d-k}$. This reduction in dimensionality of the domain of $w$ does not affect the proof of Theorem~\ref{thm:one_local_max}.


\section{Omitted proofs}\label{app:omitted}

\begin{figure}[b!]
    \centering
    \includegraphics{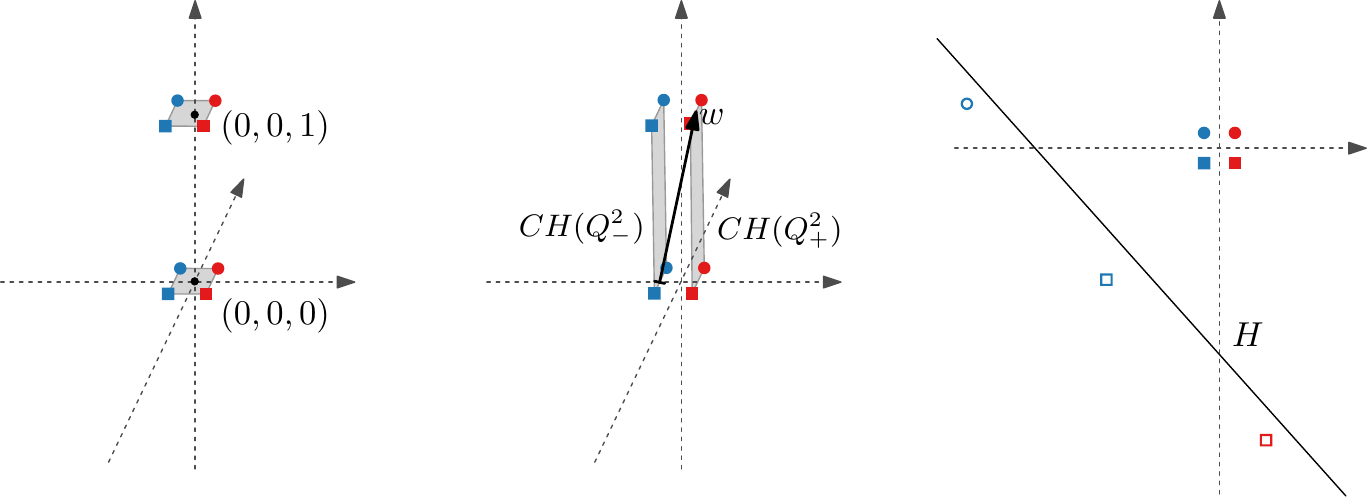}
    \caption{Illustration for Lemma~\ref{lem:not_all_labels} with $d = k = 3$ and properties fill ($a_1$), color ($a_2$), and shape ($a_3$). Left: $Q$ consisting of two copies of $C_{\epsilon}$. Middle: a separability preserving projection must be nearly orthogonal to the $(x, y)$-plane. Right: the flat $H$ separating property $a_1$.}
    \label{fig:not_all_lables}
\end{figure}

\notalllabels*

\begin{proof}
We first construct the point set $P$ for arbitrary $k$ and $d = k$. Consider the vertices of a $(k-1)$-dimensional hypercube $C_{\epsilon}$ with side length $\epsilon > 0$ centered at the origin, for which all nonzero coordinates lie in the first $k-1$ dimensions of $\Reals^d$. For each vertex $p$ of $C_{\epsilon}$, set the properties of $p$ based on its coordinates $(p^1, \ldots, p^d)$: $a_1(p) = 1$, and $a_i(p) = \sgn(p^{i-1})$ for $2 \leq i \leq k$, where $\sgn(x)$ is the sign function. Next, create a copy of $C_{\epsilon}$ (along with the assigned properties) and place it around the coordinate $(0, \ldots, 0, 1)$ (see Figure~\ref{fig:not_all_lables} left). Let the resulting point set be $Q$, and consider projecting $Q$ along a unit vector $w$. Let $w = (w^1, \ldots, w^d)$ and assume w.l.o.g. that $|w^1| \geq |w^i|$ for all $2 \leq i < k$. If $|w^1| > \epsilon$, then there always exists a line $\ell$ parallel to $w$ that intersects both $\CH(Q^2_{-})$ and $\CH(Q^2_{+})$. By Lemma~\ref{lem:sepafterproj} this would imply that $a_2$ is not strictly linearly separable after projection along $w$, so we may assume that $|w^1| \leq \epsilon$ for any separability preserving projection (see Figure~\ref{fig:not_all_lables} middle). 

Now consider a $(k-2)$-dimensional flat $H$ with the following properties: (1) it is not parallel to one of the first $k-1$ axes, (2) it lies in the first $k-1$ dimensions of $\Reals^d$ (the other coordinates are zero), and (3) the distance from the origin to $H$ is $1$ (see Figure~\ref{fig:not_all_lables} right). Consider the orthants of the $(k-1)$-dimensional subspace $A$ spanned by the first $k-1$ axes. Based on the labels of the vertices of $C_{\epsilon}$, each orthant is associated with a label for the properties $a_2, \ldots, a_k$. Due to Property (1), $H$ intersects all the first $k-1$ axes, either at the positive or the negative half-axis. Since there is exactly one orthant bounded by only the non-intersected half-axes, $H$ intersects exactly $2^{k-1}-1$ orthants. We now construct $P$ by extending $Q$ with an additional point in each of the intersected orthants, such that $H$ separates this point from the origin. The label of each such point $p$ has $a_1(p) = -1$ and is otherwise determined by the orthant. As a result, $P$ uses $2^k-1$ different labels. 

Let $v$ be the normal of $H$ in the $(k-1)$-dimensional subspace $A$. The margin for $P_{-}$ and $P_{+}$ along $v$ is at least $1 - k \epsilon$ (rough bound). For any separability preserving projection along unit vector $w$, we have that $|(w \cdot v)| \leq \epsilon$. Now consider any point $p \in P$ and its projection $p' = p - (w \cdot p) w$. We have that $(p' \cdot v) = (p \cdot v) - (w \cdot p) (w \cdot v) = (p \cdot v) \pm O(\epsilon)$, where we use the fact that $(w \cdot p) = O(1)$. Thus, the margin for property $a_1$ can be reduced by at most $O(\epsilon)$ by the projection, and hence the projection keeps $a_1$ strictly linearly separable if we choose $\epsilon$ small enough.

If $d > k$, then we can construct a simplex with side lengths $1$ in the last $d-k+1$ dimensions, and place a copy of $C_{\epsilon}$ around each of its vertices (for $d=k$ this simplex is simply an edge, as used above). With this construction we can still enforce w.l.o.g. that $|w^1| \leq \epsilon$ for any separability preserving projection along unit vector $w$, and the rest of the argument follows. 
\end{proof}

\end{document}